\documentclass[11pt]{article}

\usepackage{fullpage}
\usepackage{graphicx,enumerate}
\usepackage{float}
\usepackage{natbib}

\sloppy

\usepackage{times}
\usepackage{soul}
\usepackage{url}
\usepackage[hidelinks]{hyperref}
\usepackage[utf8]{inputenc}
\usepackage[small]{caption}
\usepackage{graphicx}
\usepackage{amsmath}
\usepackage{amsthm}
\usepackage{booktabs}
\usepackage{algorithm}
\urlstyle{same}
\usepackage{libertine}
\usepackage{hyperref}
\usepackage[svgnames]{xcolor}
\usepackage[capitalise,nameinlink]{cleveref}
\hypersetup{colorlinks={true},linkcolor={DarkBlue},citecolor=[named]{DarkGreen}}
\allowdisplaybreaks

\usepackage{amsfonts}
\usepackage{amsthm}
\usepackage{enumitem}
\usepackage{bm}
\usepackage{xcolor}
\usepackage{todonotes}
\usepackage{float}
\usepackage{algpseudocode} 

\usepackage{lineno}

\newtheorem{theorem}{Theorem}
\newtheorem{definition}{Definition}

\newtheorem{lemma}{Lemma}

\newtheorem*{opt-cash-inj}{Optimal Cash Injection}
\newtheorem*{opt-debt-rem}{Optimal Debt Removal}

\newcommand{\greedy}{{\sc{Greedy}}}
\newcommand{\liq}{{\mathcal{F}}}

\newcommand{\fn}{{G}}
\newcommand{\poa}{\text{\normalfont PoA}}
\newcommand{\pos}{\text{\normalfont PoS}}
\newcommand{\eoa}{\text{\normalfont EoA}}
\newcommand{\eos}{\text{\normalfont EoS}}

\newcommand{\remove}[1]{}
\newcommand{\odr}{\textsc{Opt-Debt-Removal}}
\newcommand{\oci}{\textsc{Opt-Cash-Injection}}

\newcommand{\NP}{\textbf{NP}}

\newcommand{\pk}[1]{{\color{black}{#1}}}
\newcommand{\mk}[1]{{\color{black}{#1}}}
\newcommand{\hz}[1]{{\color{black}{#1}}}
\newcommand{\mph}[1]{{\color{black}{#1}}}


\begin{document}

\title{\bf Optimal Bailouts and Strategic Debt Forgiveness in Financial Networks\thanks{A preliminary version of this paper appeared in ~\citep{kanellopoulos2022forgiving}. The research was carried out when Hao Zhou was a PhD student at the University of Essex.}}




\author{Panagiotis Kanellopoulos\textsuperscript{1} \and
Maria Kyropoulou\textsuperscript{1} \and
Hao Zhou\textsuperscript{1,2}}

\date{
\textsuperscript{1}School of Computer Science and Electronic Engineering, University of Essex, UK\\
\textsuperscript{2}Department of Computer Science, University of Oxford, UK}

\maketitle


\begin{abstract}
A financial system is represented by a network, where nodes correspond to banks, and directed labeled edges correspond to debt contracts between banks. Once a payment schedule has been defined, 
the liquidity of the system is defined as the sum of total payments made in the network. Maximizing systemic liquidity is a natural objective of any financial authority, so, we study the setting where the financial authority offers bailout money to some bank(s) or forgives the debts of others in order to help them avoid costs related to default, and, hence, maximize liquidity.
We investigate the approximation ratio provided by the greedy bailout policy compared to the optimal one, and we study the computational hardness of finding the optimal debt-removal and budget-constrained optimal bailout policy, respectively.

We also study financial systems from a game-theoretic standpoint. We observe that the removal of some incoming debt might be in the best interest of a bank, if that helps one of its borrowers remain solvent and avoid costs related to default. 
Assuming that a bank's well-being (i.e., utility) is aligned with the incoming payments they receive from the network, we define and analyze a game among banks who want to maximize their utility by strategically giving up some incoming payments. In addition, we extend the previous game by considering bailout payments. After formally defining the above games, we prove results about the existence and quality of pure Nash equilibria, as well as the computational complexity of finding such equilibria.
\end{abstract}

\section{Introduction}\label{sec:introduction}
A financial system comprises a set of institutions, such as banks, that engage in financial transactions. The interconnections showing the liabilities (financial obligations or debts) among the banks can be represented by a network, where the nodes correspond to banks and the edges correspond to liability relations. Each bank has a fixed amount of external assets (not affected by the network) which are measured in the same currency as the liabilities. A bank's total assets comprise its external assets and its incoming payments, and may be used for (outgoing) payments to its lenders. If a bank's assets are not enough to cover its liabilities, that bank will be in default and the value of its assets will be decreased (e.g., by liquidation); the extent of this decrease is captured by \emph{default costs} and essentially implies that the corresponding bank will have only a part of its total assets available for making payments. 

On the liquidation day (also known as clearing), each bank in the system has to pay its debts in accordance with the following three principles of bankruptcy law (see, e.g., \citep{eisenberg2001systemic}): i) \emph{absolute priority}, i.e., banks with sufficient assets pay their liabilities in full, ii) \emph{limited liability}, i.e., banks with insufficient assets to pay their liabilities are in default and pay all of their assets to lenders, subject to default costs, and iii) \emph{proportionality}, i.e., in case of default, payments to lenders are made in proportion to the respective liability. Payments that satisfy the above properties are called \emph{clearing payments} and (perhaps surprisingly) these payments are not uniquely defined for a given financial system. However, maximal clearing payments, i.e., ones that point-wise maximize all corresponding payments, are known to exist and can be efficiently computed \citep{rogers2013failure}.

\pk{Importantly, banks that are in default after clearing cannot fully repay their liabilities, which can trigger a cascading effect on the network and lead to a collapse of the financial system. Previous work has considered several systemic risk measures (see, e.g., ~\citep{chen2013axiomatic,kromer2016systemic,feinstein2017measures}), in an attempt to quantify the risk of default contagion.}
\mk{In this paper, we consider the total liquidity of a financial system, i.e. the sum of payments made at clearing, as a natural metric for the well-being of the system~\citep{lee2013systemic,dong2021some}}. Financial authorities, e.g., governments or other regulators, wish to keep the systemic liquidity as high as possible and they might interfere, if their involvement is necessary and would considerably benefit the system. For example, in the not so far past, the Greek government (among others) took loans in order to bailout banks that were in danger of defaulting, to avert collapse. In this work, we study the possibility of a financial regulating authority performing cash injections (i.e., bailouts) to selected bank(s) and/or forgiving debts selectively, with the aim of maximizing the total liquidity of the system (total money flow). Similarly to cash injections, it is a fact that debt removal can have a positive effect on systemic liquidity. Indeed, the existence of default costs can lead to the counter-intuitive phenomenon whereby removing a debt/edge from the financial network might result in increased money flow, e.g., if the corresponding borrower avoids default costs because of the removal. 

Even more surprising than the increase of liquidity by the removal of debts, is the fact that the removal of an edge from borrower $b$ to lender $l$ might result in $l$ receiving more incoming payments, e.g., if $b$ avoids default costs and there is an alternative path in the network where money can flow from $b$ to $l$. This motivates the definition of an edge-removal game on financial networks, where banks act as strategic agents who wish to maximize their total assets and might intentionally give-up a part of their due incoming payments towards this goal. As implied earlier, removing an incoming debt could rescue the borrower from financial default, thereby avoiding the activation of default costs, and potentially increasing the lender's utility (total assets). This strategic consideration is meaningful both in the context where a financial authority performs cash injections or not. We consider the existence, quality, and computation of equilibria that arise in such games.

\subsection{Our contribution}
We consider computational problems related to maximizing systemic liquidity, when a financial authority can modify the network by appropriately removing debt, or by injecting cash into selected banks. We also consider financial network games where agents (corresponding to banks) can choose to forgive incoming debts. 

We show how to compute the optimal cash injection policy in polynomial time when there are no default costs, by solving a linear program; the problem is \NP-hard when non-trivial default costs apply. As our LP-based algorithm requires knowledge of the available budget and leads to non-monotone payments, we study the approximation ratio of a greedy cash injection policy that is budget-agnostic and guarantees monotone payments. Regarding debt removal, we prove that the problem of finding the set of liabilities whose removal maximizes systemic liquidity is \NP-hard, and so are relevant optimization problems.

Regarding edge-removal games, with or without bailout, we study the existence and the quality of Nash equilibria, while also addressing computational complexity questions. Apart from arguing about well-established notions, such as the Price of Anarchy and the Price of Stability, we introduce the notion of the Effect of Anarchy (Stability, respectively) as a new measure on the quality of equilibria in this setting.

\subsection{Related work}
\hz{The interconnectedness of the modern financial system is widely regarded as a key factor contributing to the recent financial crisis. Systemic risk refers to the risk \mk{of failure} that the financial system is prone to,  due to its inherent characteristics. \mk{Various} systemic risk measures have been discussed in the literature (see in, e.g.,~\citep{chen2013axiomatic,kromer2016systemic,feinstein2017measures})}. \hz{\mk{Moreover,} a large body of work has emerged on default contagion in various financial network models, see in, e.g.,~\citep{eisenberg2001systemic,glasserman2015likely,gai2010contagion,acemoglu2015systemic}}. Our model is based on the seminal work of \cite{eisenberg2001systemic} who 
introduced a widely adopted model for financial networks, assuming debt-only contracts and proportional payments. This was later extended by \cite{rogers2013failure} to allow for default costs; \citeauthor{rogers2013failure} prove the existence of maximal clearing payments when non-trivial default costs apply, and present an efficient algorithm to compute them.
\cite{SSB17} show the complexity of finding clearing payments with credit default swaps. Additional features have been since introduced, see e.g., \citep{schuldenzucker2020default} and \citep{papp2020network} for models including credit default swaps.
\cite{ioannidis2022strong} examine clearing solutions' irrationality and approximation strength, while \cite{IoannidisKV23singleton} \mk{explore} the intractability of clearing problems in networks with derivatives and payment priorities. 


When the financial regulator has available funds to bailout each bank of the network, \cite{jackson2020credit} \mk{consider} the minimum bailout budget to ensure systemic solvency and prove that computing it is an \NP-hard problem.
When the financial authority has limited bailout budget, \cite{demange2018contagion} proposes the threat index as a means to determine which banks should receive cash during a default episode and suggests a greedy algorithm for this process. 
\cite{calafiore2024default} introduce a new concept termed default resilience margin that measures the maximum amplitude of assets price fluctuations without generating additional insolvencies.
\cite{egressy2021bailouts} \mk{study} how central banks should decide which insolvent banks to bailout and formulate corresponding optimization problems, 
\mk{while considering the market value as a metric for the welfare of the system. They derive several hardness results for various natural objectives, most of which assume non-trivial default costs.}
\cite{dong2021some} introduce an efficient greedy-based clearing algorithm for an extension of the Eisenberg-Noe model, while also studying bailout policies when banks in default have no assets to distribute. Note that the problem of injecting cash (as subsidies) in financial networks has been studied (in a different context) in microfinance markets \citep{IO18}. 
\hz{\cite{klages2022optimal} investigate the optimal bailout policy 
within a \mk{financial} network model featuring cross-holdings and portfolio overlaps.}
Most relevant to our setting is the paper by \cite{PapachristouK2022stimulus} who consider a setting where there is a fixed amount that can be offered to each bank and prove that finding the optimal \emph{discrete}\footnote{\mk{Note that the term ``discrete'' in \citep{PapachristouK2022stimulus} refers to a cash injection policy where a bank is either allocated a given fixed amount or not, which is different to the integer payments  considered in Theorem \ref{thm:alg-1}a of our paper.}}  bailout scheme is an \NP-hard problem; they design and analyze approximation algorithms, and
prove hardness-of-approximation results.
Furthermore, in \citep{papachristou2023dynamic}, they extend the static Eisenberg-Noe model in a dynamic manner, formulate the optimal cash injection problem as a Markov Decision Process, present how to efficiently compute optimal policies when amounts can come from a range and provide approximation algorithms for the setting where there is a single fixed amount that can be offered to each bank.

\mph{Relevant to debt forgiveness (or removal) is the concept of network/portfolio compression \citep{DER21}, which can be seen as the removal  of a sequence of debts that form a cycle. \cite{schuldenzucker2021portfolio} analyze the impact of compression on social welfare and banks’ incentives to accept compression proposals. \cite{veraart2022does} derives necessary structural conditions under which portfolio compression could be detrimental to the health of the financial system, while 
\cite{amini2023optimal} introduce a formulation of the optimal network compression problem for financial systems and demonstrate that \mk{it is computationally intractable for different compression models.}  
}

\pk{When taking the banks' incentives also into account,} a growing body of recent work considers strategic aspects in financial networks. Departing from the principle of proportionality \citep{eisenberg2001systemic}, \cite{bertschinger2019strategic} frame payments in the event of bankruptcy as a decision-making process. They propose two distinctive payment schemes, namely unit-ranking (or, coin-ranking), and edge-ranking, respectively, and examine the properties of equilibria in games that emerge in each payment scheme. Subsequent research by \cite{kanellopoulos2021financial} delves into priority-proportional payment schemes. \cite{papp2020network} study the banks' incentives for redefining liabilities and donating external assets. In a follow-up study \citep{PW21}, they consider the reversible network model where banks may return from insolvency, and point out that early defaulting can be an advantageous strategy for banks in some cases, while \cite{allouch2023strategic} investigate the behavior of strategic default in a two-period financial network, and show the existence of Nash equilibria of this default game and develop an algorithm to find them. Additionally, \cite{papp2021debt} study the effects of debt swaps in risk mitigation, while \cite{FroeseHWswapping} examine the algorithmic properties of debt swaps with ranking-based clearing. In recent contributions, \cite{BertschingerHKLSW2023fire} study the existence and structure of equilibria in a game-theoretic model of fire sales, while \cite{KanellopoulosKZ2023financial} examine debt transfer games, where banks strategically decide whether to transfer their debt claims. This theoretical study is complemented by empirical game analysis. \cite{hoefer2024algorithms}  study claims trading in financial networks and analyze the structural properties and computational aspects across various forms of claims trading. \cite{zhou2024strategic} analyze prepayment
games in the existence, inefficiency, and computation of resulting equilibria. \cite{tong2024reducing} focus on strategic donations from solvent banks to distressed banks, demonstrating how such donations can potentially mitigate losses and prevent default cascades. Meanwhile, in \citep{tong2024selfishly}, the same group of authors explore the impact of selfish debt cancellations on systemic risk. To the best of our knowledge, our work is the first to investigate a model for financial networks where agents consider forgiving debts as strategic actions.

	
\section{Preliminaries}\label{sec:preliminaries}
A \emph{financial network} $\fn=(V,E)$ consists of a set $V=\{v_1, \dots, v_n\}$ of $n$ banks, where each bank $v_i$ initially has some non-negative \emph{external assets} $e_i$ corresponding to income received from entities outside the financial system.
Banks have payment obligations, i.e., \emph{liabilities}, among themselves. In particular, a \emph{debt contract} creates a liability $l_{ij}$ of bank $v_i$ (the borrower) to bank $v_j$ (the lender); we assume that $ l_{ij} \geq 0$ and $l_{ii}=0$. Note that $l_{ij}>0$ and $l_{ji}>0$ may both hold simultaneously. Also, let $L_i=\sum_j {l_{ij}}$ be the total liabilities of bank $v_i$.  Banks with sufficient funds to pay their obligations in full are called \emph{solvent} banks, while ones that cannot are \emph{in default}. 
Then, the relative liability matrix $\Pi \in \mathbb{R}^{n\times n}$ is defined by 
\begin{equation*}
\pi_{ij}= \left\{
\begin{array}{ll}
l_{ij}/L_i, & \textrm{if } L_i > 0\\
0, & \textrm{otherwise}.\\
\end{array} \right.
\end{equation*}

Let $p_{ij}$ denote the actual payment\footnote{Note that the actual payment need not equal the liability, i.e., the payment obligation.} from $v_i$ to $v_j$; we assume that $p_{ii} =0$. These payments define a payment matrix $\mathbf{P} = (p_{ij})$ with $i,j \in [n]$, where by $[n]$ we denote the set of integers $\{1, \dots, n\}$. We denote by $p_i = \sum_{j \in [n]}{p_{ij}}$ the total outgoing payments of bank $v_i$. A bank in default may need to liquidate its external assets or make payments to entities outside the financial system (e.g., to pay wages). This is modeled using \emph{default costs} defined by values $\alpha, \beta \in [0,1]$. A bank in default can only use an $\alpha$ fraction of its external assets and a $\beta$ fraction of its incoming payments (the case without default costs is captured by $\alpha=\beta=1$).  The absolute priority and limited liability regulatory principles, discussed in the introduction, imply that a solvent bank must repay all its obligations to all its lenders, while a bank in default must repay as much of its debt as possible, taking default costs also into account; the requirement for \emph{proportional payments} dictates how payments happen in the latter case. 
Summarizing, it holds that $p_{ij}\leq l_{ij}$ and, in particular, 
$\mathbf{P}=\Phi(\mathbf{P})$, where


\begin{equation*}
\Phi(\mathbf{x})_{ij}= \left\{
\begin{array}{l}
l_{ij}, \quad \textrm{if } L_i\leq e_i+\sum_{j=1}^n x_{ji}\\
(\alpha e_i+ \beta \sum_{j=1}^n x_{ji}) \cdot\pi_{ij}, \quad \textrm{otherwise}.\\
\end{array} \right.
\end{equation*}

Proportional payments $\mathbf{P}$ that satisfy these constraints are called \emph{clearing payments} and have been frequently studied in the financial literature (e.g., in \citep{eisenberg2001systemic,rogers2013failure,demange2018contagion}). 

Given clearing payments $\mathbf{P}$,  the \emph{total assets} $a_i(\mathbf{P})$ of bank $v_i$ 
are defined as the sum of external assets plus incoming payments, i.e., $$a_i(\mathbf{P})=e_i+\sum_{j\in [n]} p_{ji}.$$ Maximal clearing payments, i.e., ones that point-wise maximize all corresponding payments (and hence total assets), are known to exist \citep{eisenberg2001systemic,rogers2013failure} and can be computed in polynomial time. As is standard practice, we focus on these payment throughout the paper, also to avoid ambiguity.

We measure the total \emph{liquidity} of the system (also referred to as systemic liquidity) $\liq(\mathbf{P})$ as the sum of payments traversing through the network, i.e., $$\liq(\mathbf{P})=\sum_{i\in [n]} \sum_{j\in [n]} p_{ji}.$$ We sometimes omit dependence on $\mathbf{P}$ for clarity of exposition when the payments are clear from the context.

We assume that there exists a financial authority (a \emph{regulator})
who aims to maximize the systemic liquidity. 
In particular, the regulator can decide to remove certain debts (edges) from the network or inject cash to some bank(s). In the latter case, we assume the regulator has a total \emph{budget} $M$ available in order to perform \emph{cash injections} to individual banks. We sometimes refer to the total increased liquidity, $\Delta\liq$, (as opposed to total liquidity) which measures the difference in the systemic liquidity before and after the cash injections.\footnote{This is necessary as in some cases, like the proof of the approximability of the greedy algorithm in Theorem \ref{Greedy algorithm}, we cannot argue about the total liquidity but we can argue about the total increased liquidity.} 
A \emph{cash injection policy} is a sequence of pairs of banks and associated transfers $((i_1, t_1), (i_2, t_2),$ $\ldots (i_L, t_L)) \in (V\times \mathbb{R})^L$, such that the regulator gives capital $t_1$ to bank $i_1$, $t_2$ to bank $i_2$, etc. \mk{We note that $t_k$ can be a fractional number unless stated otherwise.} These actions naturally define two corresponding optimization problems on the total (increased) liquidity, i.e., \emph{optimal cash injection} and \emph{optimal debt removal}.

\begin{opt-cash-inj} [\textsc{Opt-Cash-Injection}] Given a financial network $\fn$ and a total budget $M$, compute the cash injection policy that maximizes the total increased liquidity $\Delta\liq$.
\end{opt-cash-inj}

\begin{opt-debt-rem} [\textsc{Opt-Debt-Removal}] Given a financial network $\fn$, compute a collection of edges whose removal maximizes the total liquidity $\liq$.
\end{opt-debt-rem}
We will also find useful the notion of the \emph{threat index}\footnote{The term threat index aims to capture the ``threat'' posed to the network by a decrease in a bank's cash flow or even the bank's default; this index can be thought of as counting all the defaulting creditors that would follow a potential default of the said bank.}, $\mu_i$, of bank $v_i$, which captures how many units of total increased liquidity will be realized if the financial authority injects one unit of cash into bank $v_i$'s external assets \citep{demange2018contagion}; a unit of cash represents a small enough amount of money so that the set of banks in default would not change after the cash injection. Naturally, the threat index of solvent banks is $0$, while the threat index of banks in default will be at least $1$.  Formally, the threat index is defined as 
\begin{equation*}
\mu_i= \left\{
\begin{array}{ll}
1+\sum_{j \in D}\pi_{ij}\mu_j, & \textrm{if } a_i(\mathbf{P}) < L_i\\
0, & \textrm{otherwise},\\
\end{array} \right.
\end{equation*}
where $D=\left\lbrace j|a_j(\mathbf{P}) < L_j\right\rbrace $ is the set of banks who are in default.\footnote{In matrix form, the threat index for banks who are in default can be computed by $\left( \bm{\mathbb{I}}_{D \times D }-\bm{\pi}_{D \times D}\right)  \bm{\mu}_{D}=\bm{1}_{D}$. This is a homogeneous linear equation system where $\bm{\pi}_{D \times D}$ is the relative liability matrix only involving in the banks in set $D$; $\bm{\mathbb{I}}_{D \times D }$ and $\bm{1}_{D}$ represent the $|D| \times |D|$ dimension identity matrix and the $|D|$ dimension identity vector, respectively, while $\bm{\mu}_{D}$ is the vector of threat indices for banks in default.
} \mk{We remark that for the maximum total increased liquidity it holds $\Delta\liq \leq M\cdot \mu_\texttt{max}$, where $\mu_\texttt{max}$ is the maximum threat index.}
\medskip

\noindent\textbf{An example.} Figure \ref{fig:e1} provides an example of a financial network, inspired by an example in \citep{demange2018contagion}. The clearing payments are as follows: $p_{21}=4.4$, $p_{32} = 3.2$, and $p_{43}=p_{45}=1$, implying that banks $v_2, v_3$ and $v_4$ are in default. We assume that there are no default costs, i.e., $\alpha=\beta=1$.
The threat indexes are computed as follows: $\mu_1=\mu_5 = 0$, $\mu_2 = 1+\mu_1$, $\mu_3 = 1+\mu_2$, and $\mu_4 = 1+\frac{1}{2}\mu_3+\frac{1}{2}\mu_5$, implying that $\mu_3=\mu_4=2$, $\mu_2=1$, while $\mu_ 1=\mu_5=0$. 

 \begin{figure}[htbp]
	\centering
	\includegraphics[scale=0.7]{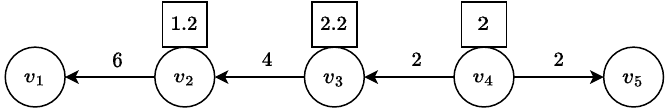}
	\caption{A simple financial network. Nodes correspond to banks, edges are labeled with the respective liabilities, while external assets are in a rectangle above the relevant banks.}
	\label{fig:e1}
\end{figure}

Additional definitions are deferred to the corresponding sections they are used in, for ease of exposition. Most importantly, we introduce (in Section~\ref{sec:games}) two novel notions, namely the Effect of Anarchy/Stability,  that measure the effect strategic behavior has on the original network. 


\section{Computing optimal outcomes}\label{sec:algorithmic}
In this section we present algorithmic and complexity results regarding the problems of computing optimal cash injection (see Section \ref{sec:bailout}) and debt removal (Section \ref{sec:bailin}) policies. 
Note that we omit referring to default costs in our statements for those results that hold when $\alpha=\beta=1$.  

\subsection{Optimal cash injections}\label{sec:bailout}
We begin with a positive result about computing the optimal cash injection policy when default costs do not apply. \hz{\mk{The proof utilizes a linear program which} is a straightforward extension of the \mk{corresponding program for calculating clearing payments in the basic model without cash injections, see e.g. \citep{eisenberg2001systemic}. This has been also observed by}  \cite{demange2018contagion} and \cite{ahn2019optimal} without an explicit focus on the computational complexity aspect.}

\begin{theorem}\label{thm:opt-cash-injection}

\oci{} can be solved in polynomial time.

\end{theorem}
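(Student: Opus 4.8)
The plan is to express the entire optimization as one polynomial-size linear program. Since $\alpha=\beta=1$, a cash injection policy affects the network only through the total amount $t_i\ge 0$ it adds to bank $v_i$ (the order of the individual transfers is immaterial, as clearing happens once on the effective external assets $e_i+t_i$). With no default costs, maximal clearing payments are captured by the scalar vector of total outgoing payments $p_i=\sum_j p_{ij}$, with $p_{ij}=\pi_{ij}p_i$, and the liquidity equals $\sum_i p_i$; maximizing liquidity and maximizing the increased liquidity $\Delta\liq$ differ only by the additive constant equal to the liquidity of the original network. Hence it suffices to optimize over vectors $\mathbf t$ with $t_i\ge 0$ and $\sum_i t_i\le M$.

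I would then consider the LP that maximizes $\sum_i p_i$ subject to $0\le p_i\le L_i$ for all $i$, $p_i\le e_i+t_i+\sum_j \pi_{ji}p_j$ for all $i$, $t_i\ge 0$ for all $i$, and $\sum_i t_i\le M$. This has $2n$ variables and $O(n)$ constraints, so it is solvable in polynomial time; the only thing to prove is that its optimal value equals the maximum liquidity attainable by a budget-$M$ cash injection policy. One direction is immediate: if $\mathbf t^\star$ is (the injection vector of) an optimal policy and $\mathbf p^\star$ the corresponding maximal clearing vector, then $(\mathbf p^\star,\mathbf t^\star)$ is LP-feasible, since $p^\star_i=\min\{L_i,\,e_i+t^\star_i+\sum_j\pi_{ji}p^\star_j\}$ implies both the $p_i\le L_i$ and the $p_i\le e_i+t_i+\sum_j\pi_{ji}p_j$ inequalities; hence the LP value is at least the optimal liquidity.

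For the converse I would argue that, for any fixed feasible $\mathbf t$, every LP-feasible $\mathbf p$ is pointwise dominated by the maximal clearing vector $\mathbf p^\star(\mathbf t)$ induced by external assets $e_i+t_i$. Indeed, feasibility gives $p_i\le\min\{L_i,\,e_i+t_i+\sum_j\pi_{ji}p_j\}$, i.e. $\mathbf p\le\Phi_{\mathbf t}(\mathbf p)$, where $\Phi_{\mathbf t}$ is the (monotone, continuous) clearing map obtained from $\Phi$ by replacing $e_i$ with $e_i+t_i$; iterating $\Phi_{\mathbf t}$ from $\mathbf p$ yields a nondecreasing sequence bounded above by $(L_1,\dots,L_n)$, whose limit is a fixed point $\ge\mathbf p$ and therefore $\le\mathbf p^\star(\mathbf t)$ (which exists and is maximal by \citep{eisenberg2001systemic,rogers2013failure}). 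Thus $\sum_i p_i\le\sum_i p^\star_i(\mathbf t)$, which is exactly the liquidity of the policy $\mathbf t$ and hence at most the optimal liquidity. Combining the two bounds, the LP optimum equals the optimal liquidity; extracting $\mathbf t$ from an optimal LP solution and then computing maximal clearing payments for assets $e_i+t_i$ gives an optimal policy. The step that needs the most care is precisely this equivalence: relaxing the fixed-point (``$\min$'') definition of clearing payments into two linear inequalities is legitimate only because we maximize $\sum_i p_i$ and because maximal clearing payments are guaranteed to exist — the same Tarski-type monotonicity argument breaks down when $\alpha,\beta<1$, which is consistent with the NP-hardness stated for the general case.
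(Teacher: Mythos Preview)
Your proposal is correct and takes essentially the same approach as the paper: both formulate the problem as a single linear program with the cash injections and (relaxed) clearing payments as variables, and then argue that the relaxation is tight. The only cosmetic differences are that the paper works with edge-level payments $p_{ij}$ rather than your scalar $p_i$, and it verifies tightness by a direct ``improve the slack'' argument on the LP optimum rather than your Tarski-style domination by the maximal clearing vector; the underlying idea is the same.
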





\begin{proof}
The proof follows by solving a linear program that computes the optimal cash injections and accompanying payments. 
	\begin{align*}
\begin{array}{ll@{}ll}
\text{maximize}  & \displaystyle\sum_i{\sum_j{p_{ij}}}& \\
\text{subject to}& \sum_{i}{x_i} \leq M,  & &\forall i, j\\
 &     p_{ij} \leq l_{ij},  & & \forall i, j\\
 &     p_{ij} \leq (x_i+e_i+\sum_{k}{p_{ki}}) \cdot \frac{l_{ij}}{L_i},  & & \forall i, j\\
&      x_i \geq 0,& & \forall i \\
&     p_{ij} \geq 0,& &\forall i,j
\end{array}
	\end{align*}
	
We denote by $x_i$ the cash injection to bank $i$ and by $p_{ij}$ the payment from $i$ to $j$. We aim to maximize the total liquidity, i.e., the total payments, subject to satisfying the limited liability and absolute priority principles.
 Recall that $M$ is the budget, $l_{ij}$ is the liability of $i$ to $j$, $e_i$ is the external assets of bank $i$, and $L_i$ is the total liabilities of $i$.

The first constraint corresponds to the budget constraint, while the second and third sets of constraints guarantee that no bank pays more than her total assets or more than a given liability; hence, the limited liability principle is satisfied. It remains to argue about the absolute priority principle, i.e., a bank can pay strictly less than her total assets only if she fully repays all outstanding liabilities.
	
Consider the optimal solution corresponding to a vector of cash injections and payments $p_{ij}$; we will show that this solution satisfies the absolute priority principle as well. We distinguish between two cases depending on whether a bank is solvent or in default. In the first case, consider a solvent bank $i$, i.e., $x_i+e_i+\sum_j{p_{ji}}\geq L_i$, for which $p_{ik}< l_{ik}$ for some bank $k$. By replacing $p_{ik}$ with $p'_{ik} = l_{ik}$, we obtain another feasible solution that strictly increases the objective function; a contradiction to the optimality of the starting solution. Similarly, consider a bank $i$ with $x_i+e_i+\sum_j{p_{ji}}< L_i$ for which $\sum_{j} p_{ij} < x_i+e_i+\sum_j{p_{ji}}$. Then, there necessarily exists a bank $k$ for which $p_{ik} < (x_i+e_i+\sum_j{p_{ji}})\cdot \frac{l_{ik}}{L_i}$ and it suffices to replace $p_{ik}$ with $p'_{ik} = (x_i+e_i+\sum_j{p_{ji}})\cdot \frac{l_{ik}}{L_i}$ to obtain another feasible solution that, again, strictly increases the objective function. Hence, we have proven that the optimal solution to the linear program satisfies the absolute priority principle and the claim follows by providing each bank $i$ a cash injection of~$x_i$.
\end{proof}

Note that the optimal policy does not satisfy certain desirable properties. In particular, as observed in \citep{demange2018contagion}, cash injections are not monotone with respect to the budget. To see that, consider the financial network in Figure \ref{fig:e1} and note that when $M=0.5$, the optimal policy would give all available budget to bank $v_3$, while under an increased budget of $1.6$, the entire budget would be allocated to $v_4$, hence $v_3$ would get nothing. Furthermore, our LP-based algorithm crucially relies on knowledge of the available budget. In an attempt to alleviate these undesirable properties, we turn our attention to efficiently approximating the optimal cash injection policy by a natural and intuitive greedy algorithm, 
and we compute its approximation ratio under a limited budget, when we care about the total \emph{increased} liquidity\footnote{\hz{Note that the approximation ratio in Theorem \ref{Greedy algorithm} remains valid and serves as a lower bound when we consider total liquidity itself, i.e., $\frac{\liq_{\greedy}}{ \liq_{OPT}}= \frac{\liq^{i}+\Delta\liq_{\greedy}}{ \liq^{i}+\Delta\liq_{OPT}} \geq \frac{\Delta\liq_{\greedy}}{ \Delta\liq_{OPT}}$, where $\liq^{i}$ is non-negative and represents the total liquidity without any cash injections.}}.


\mk{
\begin{algorithm}
\caption{The \greedy\ cash injection policy} 
\begin{algorithmic}[1]
\Require $M>0$ \Comment{Initial budget}
\State $k=0$ \Comment{Let Round $0$ be the initial network}

\While {$M>0$ and at least one bank in default}

\State $k \gets k+1;$

\State $i_k$: the bank with the highest threat index;

\State $t_k$: the minimum amount that,  if transferred to $i_k$, would lead to some previously defaulting bank to become solvent;

\State Inject $\min\{M,t_k\}$ to $i_k$;

\State $M \gets M - \min\{M,t_k\}$
\EndWhile
\end{algorithmic} 
\end{algorithm}}

\begin{definition}[\mk{\greedy\ 's  approximation ratio}]
	
The \emph{approximation ratio} of \greedy\ shows how smaller the total \emph{increased} liquidity (or money flow) can be, compared to the optimal total increased liquidity, and is computed as 
	\begin{equation*}
	\mathcal{R}_{\greedy}=\min_{\fn, M} \frac{\Delta\liq_{\greedy}}{\Delta\liq_{OPT}},
	\end{equation*}
	where the minimum is computed over all possible networks and budgets.
\end{definition}

Let us revisit the example in Figure  \ref{fig:e1}, assuming a budget $M= 1.6$. Initially banks $v_3$ and $v_4$ have the highest threat index  of  $\mu_3= \mu_4=2$  compared to $\mu_1= \mu_5=0$, and $ \mu_2=1$. We can assume\footnote{This is consistent to our tie breaking assumption that favors the least index.} that bank $v_3$ would receive the first cash injection ($i_1=v_3$) and in fact this will be equal to $t_1=0.8$. Indeed, a cash injection of $0.8$ to $v_3$ will result in $v_3$ becoming solvent (notice that $v_3$ receives $1$ from $v_4$), while a smaller cash injection would not impose any change on the threat index vector. At this stage, the threat index of each bank is as follows $\mu_1'=\mu_3'= \mu_5'=0$ and $\mu_2'= \mu_4'=1$. At this round, $i_2=v_2$ would receive the remaining budget of $t_2=0.8$. Hence, the \emph{total increased liquidity} achieved by \greedy\ at this instance is $\Delta\liq_{\greedy}= 
2.4$ ($t_1$ will traverse edges $(v_3,v_2)$ and $(v_2,v_1)$, while $t_2$ will traverse edge $(v_2,v_1)$). However, the optimal cash injection policy is to inject the entire budget $M=1.6$ to bank $v_4$ resulting in $\Delta\liq_{OPT}=3.2$. Therefore, this instance reveals $\mathcal{R}_{\greedy}\leq \frac{2.4}{3.2}=3/4$.

\begin{theorem}\label{thm:greedy_polynomial}
Algorithm \greedy{} runs in polynomial time.    
\end{theorem}
\begin{proof}
\mk{We note that in each round of \greedy\ at least one (previously defaulted) bank will become irrevocably solvent, so \greedy\ will terminate in at most $n$ rounds.} 

We now show any single round $k$ of \greedy\ takes polynomial time; clearly, this suffices for the result to follow. For a given network $G^k$ at the beginning of the $k$-th round of \greedy, we can run the polynomial-time clearing algorithm proposed in~\citep{rogers2013failure} to determine the set of defaulted banks $D^k$ and the total unpaid liabilities $\hat{L}_{j_k} = \sum_{i \in N} (l_{j_k,i}-p_{j_k, i})$ for each defaulted bank $j_k \in D^k$. In addition, the threat index for each bank in $G^k$ can be computed in polynomial time by solving a system of homogeneous linear equations~\citep{demange2018contagion}. The bank $i_k$, which has the highest threat index, can then be straightforwardly selected. Furthermore, we can always find a sufficiently small value $\hat{t}_k$ of cash injection  to bank $i_k$, such that the set of defaulted banks $D^k$ remains unchanged after the injection. \pk{Indeed, a value \pk{$\hat{t}_k < \frac{\min_{j_k \in D^k}{\hat{L}_{j_k}}}{\max_{i}{\mu_i}}$} suffices as the additional payments cannot change $D^k$.} Let $\delta_{i_k,j_k}$ represent the increase in incoming payments that bank $j_k \in D^k$ receives after this injection, which can be computed by rerunning the polynomial-time clearing algorithm. The ratio $\frac{\delta_{i_k,j_k}}{\hat{t}_k}$ measures the marginal increase in incoming payments that bank $j_k$ receives due to the cash injection to bank $i_k$. Therefore, the cash injection $t_k$ in \greedy\ can be computed as 

 \begin{equation*}
     t_k=\min_{j_k \in D^k}\frac{\hat{L}_{j_k} \cdot \hat{t}_k}{\delta_{i_k,j_k}}
 \end{equation*}
  where $\frac{\hat{L}_{j_k} \cdot \hat{t}_k}{\delta{i_k,j_k}}$ represents the minimum value of cash injection required for bank $i_k$ such that the increased incoming payments to bank $j_k$ can cover its not-yet-paid liabilities, making it solvent and thus changing the set of defaulted banks $D^k$. Clearly, computing $t_k$ for any given round $k$ takes polynomial time; this concludes the proof. 
\end{proof}








\begin{theorem}\label{Greedy algorithm}
	\greedy's approximation ratio is at most $3/4$. For inputs satisfying $M\leq t_1\frac{\mu_{v}}{\mu_{v}-1}$, this  ratio is tight.
\end{theorem}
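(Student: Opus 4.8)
The plan is to prove the two parts of the statement separately: first the upper bound $\mathcal{R}_{\greedy} \le 3/4$ holds in general, and second that on the restricted family of instances satisfying $M \le t_1\frac{\mu_v}{\mu_v-1}$ the bound is achieved with equality. For the upper bound, the worked example in Figure~\ref{fig:e1} with $M=1.6$ already exhibits an instance where $\Delta\liq_{\greedy}/\Delta\liq_{OPT} = 2.4/3.2 = 3/4$, so it suffices to verify that this instance is legitimate (i.e., that \greedy\ really does behave as described, including the tie-breaking, and that the optimum really is $3.2$). The optimum claim follows since injecting the whole budget into $v_4$ sends $1.6$ along $(v_4,v_3)$ — wait, more carefully: $v_4$ has threat index $2$, so one unit of cash into $v_4$ yields $2$ units of increased flow as long as the default set is unchanged, and $M=1.6 \le t_1 \cdot \tfrac{\mu_4}{\mu_4-1} = 0.8 \cdot 2 = 1.6$ keeps the full injection within the ``small enough'' regime where $\mu_4 = 2$ stays valid, giving exactly $\Delta\liq_{OPT} = M\cdot\mu_{\texttt{max}} = 3.2$. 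Since $\mathcal{R}_{\greedy}$ is a minimum over all networks and budgets, exhibiting one instance attaining $3/4$ proves $\mathcal{R}_{\greedy} \le 3/4$.

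For the tightness part under the hypothesis $M \le t_1\frac{\mu_v}{\mu_v - 1}$ (where $v$ denotes the bank of maximum threat index receiving \greedy's first injection), I would argue a matching lower bound $\Delta\liq_{\greedy} \ge \tfrac{3}{4}\,\Delta\liq_{OPT}$ for every such instance. The upper bound on the optimum is $\Delta\liq_{OPT} \le M\cdot\mu_{\texttt{max}} = M\cdot\mu_v$ (stated in the Preliminaries). For the lower bound on \greedy, note that \greedy's first step injects $t_1$ into $v$, and since $t_1$ is by definition small enough not to change the default set, this contributes exactly $t_1\cdot\mu_v$ to the increased liquidity. After this step, $v$ becomes solvent and the new maximum threat index $\mu'$ satisfies $\mu' \ge \mu_v - 1$ (removing $v$ from the default set can decrease any dependent threat index by at most the $\pi$-weighted contribution through $v$, and in the worst case by $1$). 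The remaining budget $M - t_1$ is then spent at rate at least $\mu_v - 1$ per unit, so $\Delta\liq_{\greedy} \ge t_1\mu_v + (M - t_1)(\mu_v - 1)$. Combining, the ratio is at least $\frac{t_1\mu_v + (M-t_1)(\mu_v-1)}{M\mu_v}$; using the hypothesis $M \le t_1\frac{\mu_v}{\mu_v-1}$, i.e. $M(\mu_v-1) \le t_1\mu_v$, one checks this quantity is minimized at the boundary and equals $3/4$ there — the boundary substitution gives numerator $t_1\mu_v + (M-t_1)(\mu_v-1)$ with $M-t_1 = t_1\frac{\mu_v}{\mu_v-1} - t_1 = \frac{t_1}{\mu_v-1}$, so numerator $= t_1\mu_v + t_1 = t_1(\mu_v+1)$ and denominator $= t_1\frac{\mu_v}{\mu_v-1}\cdot\mu_v = t_1\frac{\mu_v^2}{\mu_v-1}$, giving ratio $\frac{(\mu_v+1)(\mu_v-1)}{\mu_v^2} = \frac{\mu_v^2-1}{\mu_v^2} = 1 - \frac{1}{\mu_v^2}$, which is $3/4$ when $\mu_v = 2$.

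The main obstacle I anticipate is the second (tightness/lower-bound) direction, and specifically two subtleties: (i) justifying that after \greedy's first injection the maximum threat index drops by at most $1$ — this requires reasoning about how the threat-index linear system $(\mathbb{I}_{D\times D} - \pi_{D\times D})\mu_D = \mathbf{1}_D$ changes when one node leaves $D$, and one must be careful that subsequent \greedy\ steps might further shrink the default set, but each such step can only be triggered by a node becoming solvent, so an inductive/telescoping argument bounding the per-step rate should work; and (ii) confirming that the extremal ratio over the feasible region of $(M, t_1, \mu_v)$ is exactly $3/4$ and is attained — this needs the observation that the worst case forces $\mu_v = 2$ (the smallest threat index strictly exceeding $1$ that is realizable, since $\mu_v = 1$ would make \greedy\ optimal), together with $M$ at the upper boundary of the hypothesized range. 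I would also need to handle the edge case where the budget runs out before \greedy\ completes a step, and the case $\mu_v = 1$ separately (there \greedy\ is trivially optimal). The first direction is essentially just verifying the Figure~\ref{fig:e1} computation, which the excerpt has already carried out.
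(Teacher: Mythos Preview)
Your upper-bound direction is fine and matches the paper: the Figure~\ref{fig:e1} example with $M=1.6$ indeed gives $\mathcal{R}_{\greedy}\le 3/4$.

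The lower-bound direction has a genuine gap. Your key claim, that after \greedy's first injection the new maximum threat index $\mu'$ satisfies $\mu'\ge \mu_v-1$, is false. A clean counterexample is exactly the family of networks the paper constructs: take a chain $v\to v_1\to\cdots\to v_{\lceil\mu_v\rceil}$ of defaulting nodes (so $v$ has threat index $\mu_v$, which can be any value~$>1$), together with a separate node $w$ with edges $w\to v$ and $w\to z$ calibrated so that $\mu_w=\mu_v$ as well. Injecting $t_1$ into $v$ makes $v$ and every $v_i$ solvent in one shot; the only defaulting node left is $w$, and since both of $w$'s creditors are now solvent, $\mu_w'=1$, not $\mu_v-1$. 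Your informal justification (``removing $v$ from $D$ can decrease any dependent threat index by at most the $\pi$-weighted contribution through $v$'') ignores the cascade: when $v$ leaves $D$ so do all the $v_i$, and $w$'s threat index collapses by $\mu_v-1$, not by $1$.

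This error propagates. With the (incorrect) rate $\mu_v-1$ you get the ratio $1-1/\mu_v^2$, which is \emph{not} uniformly $\ge 3/4$: it tends to $0$ as $\mu_v\to 1^+$, and threat indices in $(1,2)$ are certainly realizable (e.g.\ $\mu_i=1+\pi_{ij}$ for a defaulting creditor $j$ with $\mu_j=1$), so your assertion that ``the worst case forces $\mu_v=2$ as the smallest realizable index $>1$'' is also wrong. The paper instead uses the only bound that is always available, namely the trivial $\mu'\ge 1$ for any defaulting node, giving $\Delta\liq_{\greedy}\ge t_1\mu_v+(M-t_1)\cdot 1$. That yields the ratio $1-\tfrac{\mu_v-1}{\mu_v^2}$ at the boundary $M=t_1\tfrac{\mu_v}{\mu_v-1}$, and \emph{this} expression is minimized over $\mu_v>1$ at $\mu_v=2$ with value $3/4$. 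The paper packages the argument as a reduction to an explicit worst-case family $\fn'$ (satisfying properties (P1) and (P2)), but the analytic content is precisely this: use $\mu'\ge 1$, not $\mu'\ge\mu_v-1$.
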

\begin{proof}
The upper bound follows from the instance in Figure~\ref{fig:e1} and the discussion above; in the following, we argue about the lower bound when the total budget $M$ satisfies the condition in the statement. We consider the following properties and claim that the worst approximation ratio of \greedy\ is achieved at networks that satisfy both properties. To prove this claim we will show that starting from an arbitrary network $\fn$ on which \greedy\ approximates the optimal total increased liquidity by a factor of $r$, we can create a new network that satisfies properties (P1) and (P2), such that \greedy\ approximates the optimal total increased liquidity in the new network by a factor of at most $r$. We can then bound the approximation ratio of \greedy\ on the set of networks that satisfy these properties. 
\begin{itemize}
	\item[(P1)] The total increased liquidity achieved by \greedy\ is exactly $t_1 \mu_{i_1}+(M-t_1)$.
	\item[(P2)] The optimal total increased liquidity is exactly $\mu_{i_1} M$.
\end{itemize}

Let the bank with the highest threat index in $\fn$ be $v$ and let $\mu_{v}$ be that threat index. If $\mu_{v}=1$ or $M\leq t_1$ then the claim is true (\greedy\ trivially achieves an approximation ratio of $1$), so we henceforth assume that $\mu_{v}>1$ and $M>t_1$. We create $\fn'$ (see Figures \ref{fig:Greedy_approx-ub} and  \ref{fig:Greedy_approx-bad_network}) as follows. To formally describe our construction it is convenient to express the highest threat index as $\mu_{v}=x+\frac{a}{a+b}$ for some $a,b>0$ that satisfy $a+b=t_1$ and $x$ a fixed integer. If $\mu_{v}$ is an integer, then we set $x=\mu_{v}-1$ (which implies $a=t_1$ and $b=0$), while if $\mu_{v}$ is not an integer, we set $x=\lfloor \mu_{v}\rfloor$ and $a=(\mu_{v}-\lfloor \mu_{v}\rfloor) t_1$.  Our network $\fn'$ comprises $\lceil \mu_{v}\rceil+4$ nodes, i.e., $u,v,w,z$ and $v_i$, for $i=1, \ldots, {\lceil \mu_{v}\rceil}$. There is a directed path of length $\lceil \mu_{v}\rceil$, including the following nodes in sequence $v$ and $v_i$, for $i=1, \ldots, {\lceil \mu_{v}\rceil}$, where the first $\lfloor \mu_{v}\rfloor-1$ edges have liability $t_1$. The remaining edge(s) on that path has/have liability $a$ and there is an edge $(v_{\lfloor \mu_{v}\rfloor-1},u)$ with liability $b$. Moreover, there are edges $(w,v)$ and $(w,z)$ such that $l_{wv}=t_1$ and ${l_{wz}}=\frac{t_1}{(\mu_{v}-1)}$. Note that the liabilities of the two outgoing edges of $w$ are selected so that $w$ and $v$ both have the highest threat index in $\fn'$; indeed, the threat index of $v$ in $\fn'$ is $\mu_{v}'=\mu_{v}$, and the threat index of $w$ in $\fn'$ is $\mu_{w}'=1+\frac{l_{wv}}{L_w}\mu_{v}=\mu_{v}$, where $L_w=l_{wv}+l_{wz}$. Immediate consequences of our construction are (i) $\mu_{v}=\frac{L_w}{l_{wz}}$ and (ii) $\frac{l_{wv}}{L_w}=\frac{\mu_{v}-1}{\mu_{v}} $ which will be useful later. 

\begin{figure}[htbp]
\centering
\includegraphics[scale=0.6]{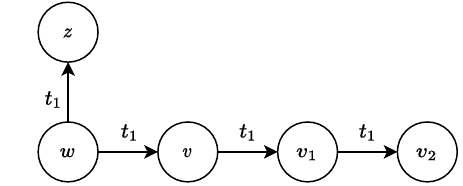}
\caption{An example network used in the lower bound of the approximation ratio of \greedy\ for $\mu_{v}=2$ and $\mu_{w}=1+\frac{1}{2}\cdot 2=2$. The claim in the proof is that for any arbitrary network such that the first cash injection made by \greedy\ is $t_1$, and the highest threat index is an integer, e.g. $2$ in this case, the network in this figure achieves at most the same approximation ratio, while satisfying properties (P1) and (P2).}
\label{fig:Greedy_approx-ub}
\end{figure}

\begin{figure}[htbp]
\centering
\includegraphics[scale=0.6]{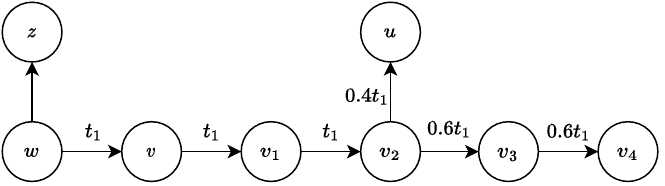}
\caption{Similarly to Figure \ref{fig:Greedy_approx-ub}, for the case where the highest threat index is not an integer, e.g. $\mu_{v}=3.6=1+(1+(1+\frac{0.6t_1}{t_1}\cdot 1))$ and $\mu_{w}=1+\frac{t_1}{t_1+t_1/2.6}\cdot \mu_{v}=1+\frac{2.6}{3.6}\cdot 3.6=3.6$.}
\label{fig:Greedy_approx-bad_network}
\end{figure}

To see that $\fn'$ satisfies property (P1), it suffices to consider that in the first step, \greedy\ offers a cash injection of $t_1$ to $v$ in $\fn'$; $w$ is then the only node in default and has threat index equal to $1$. The total increased liquidity achieved by \greedy\ in $\fn'$ is exactly $t_1\mu_{v}+(M-t_1)$, while \greedy\ achieves at least that total increased liquidity in $\fn$ since each node in default has threat index at least equal to $1$. Assume now that the regulator offers the entire  budget to node $w$ in $\fn'$; this would result to total increased liquidity of $\mu'_{w} M$ as required by (P2), and since, by construction $\mu'_{w} =\mu_{v}$ is the maximum total increased liquidity in $\fn$, it holds that $\mu'_{w} M$ is an upper bound on the optimal total increased liquidity in the original network too. We can conclude that it is without loss of generality to restrict attention to networks that satisfy properties (P1) and (P2) when proving a lower bound on the approximation ratio of {\greedy}.

Overall, it holds that the approximation ratio of \greedy\ is lower bounded by

\begin{align*}
\mathcal{R}_{\greedy}&\geq\min_{\mathcal{\fn}^*, M}\frac{\Delta\liq_{\greedy}}{\Delta\liq_{OPT}}\\
& \geq \min_{\mathcal{\fn}^*, M}\frac{t_1\mu_{v}+(M-t_1)}{M \mu_{v}},
\end{align*}
where the minimum ranges over the set of networks $\mathcal{\fn}^*$ satisfying properties (P1) and (P2). Straightforward calculations after substituting $\mu_{v}=\frac{L_w}{l_{wz}}$ which holds by construction, leads to
\begin{align*}
\mathcal{R}_{\greedy}&\geq \min_{\mathcal{\fn}^*, M} \left\{1+\left(\frac{t_1}{M}-1\right)\frac{l_{wv}}{L_w}\right\}\\
&\geq \min_{\mathcal{\fn}^*}\left\{ 1+\left(\frac{\mu_{v}-1}{\mu_{v}}-1\right)\frac{\mu_{v}-1}{\mu_{v}}\right\}\\
&\geq 1-1/4\\
&=3/4,
\end{align*}
where the second inequality holds by assumption that $M\leq t_1\frac{\mu_{v}}{\mu_{v}-1}$ and since, by construction $\frac{l_{wv}}{L_w}=\frac{\mu_{v}-1}{\mu_{v}} $, while the third inequality holds since $-1/4$ is a global minimum of function $f(x)=(x-1)x$. 
\end{proof}

We conclude this section with some computational complexity results. \mk{We note the contrast to similar results in the literature, which apply to slightly different models, e.g. ``binary'' bailout decisions \hz{in~\citep{PapachristouK2022stimulus}}, or ensuring systemic solvency, i.e., all agents become solvent, \hz{in~\citep{jackson2020credit}}.}
\begin{theorem}\label{thm:alg-1}
The following problems are \emph{\NP}-hard: 
	\begin{enumerate}[label=\emph{\alph*})]

  \item  \oci{} under the constraint of integer payments.

  \item   \oci{} with default costs $\alpha \in [0,1)$ and $\beta \in [0,1]$.

  \item Compute the minimum budget so that a given agent becomes solvent, with default costs $\alpha \in [0,1/2)$ and $\beta \in [0,1]$.
	\end{enumerate}
\end{theorem}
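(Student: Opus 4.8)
The plan is to prove all three statements by polynomial-time reductions from classical NP-hard problems, in each case building a small financial network in which the combinatorial structure of the source instance is encoded by \emph{all-or-nothing} decisions about which banks the regulator pushes from default into solvency. The non-convexity that makes such a decision genuinely discrete is supplied, respectively, by the integrality of payments in part~a), and by the discontinuity that a default cost introduces at a bank's solvency threshold in parts~b) and~c).

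\textbf{Part a)} ($\alpha=\beta=1$, integer payments). I would reduce from \textsc{Set Cover} (equivalently \textsc{Vertex Cover}). For each set $S_i$ create a ``set bank'' with external assets $0$ whose outgoing liabilities are all equal to one unit: one unit owed to each element bank it covers, padded with unit liabilities to a sink so that every set bank has the same total liability $q$. Because, in default, payments must be proportional to the (equal) liabilities \emph{and} integral, the total amount a set bank forwards is forced to be a multiple of $q$; hence it forwards either nothing or exactly one unit to each element it covers, and it is ``activated'' precisely when it receives an injection of $q$. Each element bank has external assets $R-1$ for a large $R$ and owes $R$ to a single ``jackpot'' bank, so it becomes solvent (and forwards $R$) as soon as it receives one unit from any activated set covering it; the jackpot bank then becomes solvent, releasing a large downstream payment, exactly when every element bank contributes, i.e.\ when all elements are covered. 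Choosing $R$ so that the jackpot's incoming threshold exceeds the budget rules out bypassing the set banks by injecting into element or jackpot banks directly, and one verifies that activating each chosen set bank with exactly $q$ (never more) is optimal; thus a budget $M=tq$ yields liquidity above the target iff there is a cover of size $t$.

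\textbf{Parts b) and c)} (default costs). Both reduce from \textsc{Knapsack} (whose decision version is NP-hard), exploiting the jump in a bank's payout at its solvency threshold. For~b), take $n$ non-interacting banks, each owing only to a private sink and receiving nothing (so $\beta$ is irrelevant), with bank $b_i$ having external assets $e_i$ and liability $L_i>e_i$, hence deficit $d_i=L_i-e_i$: a unit injected into a still-defaulting $b_i$ adds only $\alpha<1$ to liquidity, whereas the injection of exactly $d_i$ that makes $b_i$ solvent adds, on top of that, a discrete jump of $(1-\alpha)L_i$ (just $L_i$ when $\alpha=0$). Hence, up to the uniform term $\alpha M$ obtained by spending the whole budget, maximizing liquidity is exactly \textsc{Knapsack} with weights $d_i$, values $(1-\alpha)L_i$, and capacity~$M$. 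For~c), let the designated agent $v^*$ have no external assets and a single liability $L^*$, fed only by upstream banks $u_i$ with $e_i>0$ that owe $L_i>e_i$ solely to $v^*$; activating $u_i$ costs $d_i=L_i-e_i$ and delivers $L_i$ to $v^*$, whereas a unit injected into $v^*$ directly delivers exactly one unit. The hypothesis $\alpha<1/2$ is exactly what makes pumping cash through a still-defaulting $u_i$ strictly worse than injecting into $v^*$ directly --- each unit reaches $v^*$ attenuated by at most $\alpha<1/2<1$, and is thus dominated even after accounting for the possibility of later tipping $u_i$ over its threshold. So an optimal regulator only fully activates some subset $S$ of upstream banks and tops up $v^*$ directly, whence the minimum budget equals $L^*-\max\{\sum_{i\in S}e_i:\sum_{i\in S}L_i\le L^*\}$ (a dominated ``overshoot'' option aside) --- a \textsc{Knapsack} value --- and ``is the minimum budget at most $B$?'' decides the \textsc{Knapsack} instance.

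\textbf{Main obstacle.} In every part the crux is the same: showing that the gadgets cannot be ``cheated'' by fractional cash injections or by partial, sub-threshold clearing payments, i.e.\ that an optimal policy genuinely decomposes into the intended atomic activations. Formally this comes down to monotonicity and attenuation estimates for the \emph{maximal} clearing vector as a function of the injection vector, derived from the fixed-point identity $\mathbf{P}=\Phi(\mathbf{P})$; this is precisely where the numeric side conditions are spent --- the equal-liability (indivisibility) structure in part~a), and the thresholds $\alpha<1$ and $\alpha<1/2$ in parts~b) and~c), respectively.
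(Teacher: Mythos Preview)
Your approaches differ from the paper's in all three parts, with varying success.

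\textbf{Part b).} Your Knapsack reduction with non-interacting banks is correct and, if anything, cleaner than the paper's route. They reduce from \textsc{Partition}: each element-bank $v_i$ has externals $e_i=x_i$ and owes $\tfrac{4}{3}e_i$ to a collector $S$ and $\tfrac{2}{3}e_i$ to a dummy $T$, with $S$ owing $\tfrac{2+\alpha}{3}\sum_ie_i$ downstream; with budget $\tfrac12\sum_ie_i$ the maximum liquidity $\tfrac{5\alpha+10}{6}\sum_ie_i$ is attained iff the bailed-out subset is an exact half. Your construction makes the solvency jump and the Knapsack identification transparent without any layer-by-layer accounting.

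\textbf{Part a).} The paper does essentially what you propose but strips the baggage: they reduce from X3C (every set has size~$3$, so no padding), with just set-banks $u_i\to$ element-banks $t_j\to$ a single terminal $T$, all unit liabilities, no externals, budget $3k$. Liquidity $6k$ then forces exactly $3k$ on each of the two layers, which pins down an exact cover in a couple of lines. Your padding and jackpot are not fatal --- one can check your construction works --- but your stated reason for ruling out direct injection into the jackpot (``threshold exceeds the budget'') cannot hold: the jackpot's shortfall is $n$, and $tq\ge n$ whenever a size-$t$ cover exists. The construction survives only because jackpot injection has the lowest per-unit liquidity yield, which you do not argue.

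\textbf{Part c).} Here there are real gaps. First, your account of the hypothesis $\alpha<1/2$ is wrong: in your construction \emph{any} $\alpha<1$ already makes routing cash through a still-defaulting $u_i$ dominated by direct injection into $v^*$, so you never actually use $\alpha<1/2$. (That threshold is specific to the paper's gadget --- the $4{:}2$ liability split makes the decisive inequality require $1-2\alpha>0$.) Second, the overshoot option you wave aside is \emph{not} dominated: with a single upstream bank having $L_1=L^*+1$ and $e_1=L^*$, your Knapsack optimum is $0$ and your formula gives minimum budget $L^*$, but activating $u_1$ costs $d_1=1$ and already makes $v^*$ solvent, so the true minimum budget is~$1$. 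This breaks the Knapsack identification as written. It is repairable (scale all $L_i$ so that any overshoot in $\sum L_i$ dwarfs $\sum e_i$), but the proposal does not do this. The paper sidesteps the whole issue by reusing the \textsc{Partition} gadget from b) and asking for the minimum budget that makes the collector $S$ solvent.
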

\begin{proof}
We begin with the case where all proportional payments need to be integers, and then prove the cases where default costs apply.
\medskip


\textbf{Hardness of computing the optimal cash injection policy under integer payments.} 
We warm-up with a rather simple reduction from \textsc{Exact Cover by 3-Sets} (X3C), a well-known \NP-complete problem. An instance of X3C consists of a set $X$ of $3k$ elements together with a collection $C$ of size-3 subsets of $X$. The question is whether there exists a subset $C' \subseteq C$ of size $k$ such that each element in $X$ appears exactly once in $C'$.

 We build an instance of our problem as follows. We add an agent $u_i$ for each $c_i \in C$ and an agent $t_i$ for each element $x_i \in X$; we also add agent $T$. There are no external assets and the liabilities are as follows. Each agent $u_i$, corresponding to $c_i = \{a, b, c\}$ where $a,b,c$ are elements in $X$, has liability $1$ to each of the three agents $t_a, t_b, t_c$. Furthermore, each agent $t_i$ has liability of $1$ to agent $T$, while we assume that the budget equals $3k$. 

Due to the integrality constraint and the fact that payments are proportional, a yes-instance for X3C, admitting a solution $C'$, leads to a solution of liquidity $6k$, by injecting a payment of $3$ to each $u_i$ corresponding to the set $c_i \in  C'$. 

On the other direction, we argue that any solution with liquidity at least $6k$ leads to a solution for the instance of X3C. Indeed, any such solution must necessarily lead to liquidity of at least $3k$ to the edges from the $u_i$ agents to the $t_i$ agents. Since the budget equals $3k$ this implies that exactly $k$ of the $u_i$ agents must receive a payment of $3$ and these agents should cover the entire set of the $t_i$ agents.

\medskip

\textbf{Hardness of computing the optimal cash injection policy with default costs.}
Our proof follows by a reduction from the \textsc{Partition} problem, a well-known \NP-complete problem. Recall that in \textsc{Partition}, an instance $\mathcal{I}$ consists of a set $X$ of positive integers $\{x_1, x_2, \dots, x_k\}$ and the question is whether there exists a subset $X'$ of $X$ such that $\sum_{i \in X'}{x_i} = \sum_{i \notin X'}{x_i} = \frac{1}{2}\sum_{i \in X}{x_i}$.

The reduction works as follows. Starting from $\mathcal{I}$, we build an instance $\mathcal{I}'$ by adding an agent $v_i$ for each element $x_i \in X$ and allocating an external asset of $e_i = x_i$ to $v_i$; we also include three additional agents $S, T$ and $G$. Each agent $v_i$ has liability equal to $\frac{4e_i}{3}$ to $S$ and equal to $\frac{2e_i}{3}$ to $T$, while $S$ has liability $\frac{2+\alpha}{3}\sum_{i}{e_i}$ to $G$; see also Figure \ref{Optimal bailout with default cost}. We assume the presence of default costs $\alpha \in [0,1)$, and $\beta \in [0,1]$, while the budget is $M = \frac{1}{2}\sum_i{e_i}$; clearly, the reduction requires polynomial-time. 

\begin{figure}[htbp]
	\centering
	\includegraphics[scale=0.63]{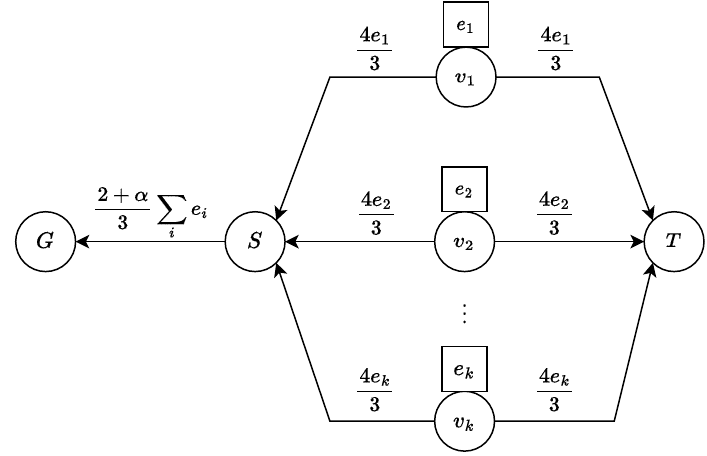}
	\caption{The reduction used to show hardness of computing the optimal cash injection policy when $\alpha<1$.}
	\label{Optimal bailout with default cost}
\end{figure}

We first show that if $\mathcal{I}$ is a yes-instance for \textsc{Partition}, then the total liquidity is $\liq = \frac{5\alpha+10}{6} \sum_i e_i$. Indeed, consider a solution $X'$ for instance $\mathcal{I}$ satisfying $\sum_{i \in X'}{x_i} = \frac{1}{2}\sum_{i \in X}{x_i}$, and let the set $B$ contain agents $v_i$ where $x_i \in X'$. Then, since $M = \frac{1}{2}\sum_i{e_i}  = \sum_{i \in B}{e_i}$, we choose to inject an amount of $e_i$ to any agent $v_i \in B$. The total assets of agent $S$ are then 
\begin{align*}
a_S &= \frac{4}{3}\sum_{i \in B}{e_i} +\frac{2 \alpha}{3}\sum_{i \notin B}{e_i}\\
&= \frac{2+\alpha}{3}\sum_i{e_i},
\end{align*}
and, hence, $S$ is solvent. The total liquidity in this case is  
\begin{align*}
\liq &= 2\sum_{i \in B}{e_i} + \alpha \sum_{i \notin B}{e_i} + \frac{2+\alpha}{3}\sum_i{e_i}\\
&= \frac{5\alpha+10}{6} \sum_i e_i,
\end{align*}
as desired.

We now show that any cash injection policy that leads to a total liquidity of at least $\frac{5\alpha+10}{6} \sum_i e_i$ leads to a solution for instance $\mathcal{I}$ of \textsc{Partition}. Assume any such cash injection policy and let $B$ be the set of $v_i$ agents that become solvent by it. 
Denote by $\liq_{v}$ the liquidity arising solely from  the payments made by the  agents $v_i$, for $i=1,\ldots,k$, to their direct neighbors, and we denote by $t_S$ the amount of cash injected to $S$, we get that  
		\begin{align}\label{eq:liquidity_v_is-lower}\nonumber
\liq_{v}&=2\sum_{i \in B}e_i+\left( \sum_{i \notin B}e_i+\frac{\sum_i e_i}{2}-\sum_{i \in B}e_i-t_S\right) \cdot \alpha\\\nonumber
&=2\sum_{i \in B}e_i+\left(\frac{3}{2}\sum_i e_i-2\sum_{i \in B}e_i-t_S\right)\cdot \alpha\\
&\leq 2(1-\alpha)\sum_{i \in B}e_i+\frac{3\alpha}{2}\sum_i e_i.
		\end{align}

It also holds that 
\begin{equation}\label{eq:liquidity_v_is-upper}
\liq_{v}\geq \left(1+\frac{\alpha}{2}\right)\sum_i e_i
\end{equation}
since, by assumption, the policy under consideration leads to a total liquidity of at least $\frac{5\alpha+10}{6} \sum_i e_i$ and the payment from $S$ to $G$ can be at most $\frac{2+\alpha}{3}\sum_i e_i$. 

Combining Inequalities (\ref{eq:liquidity_v_is-lower}) and (\ref{eq:liquidity_v_is-upper}), we get that
\begin{align*}
2(1-\alpha)\sum_{i \in B}e_i+\frac{3\alpha}{2}\sum_i e_i &\geq \left(1+\frac{\alpha}{2}\right)\sum_i e_i, \qquad \iff\\
2(1-\alpha)\sum_{i \in B}e_i&\geq \left(1-\alpha\right)\sum_i e_i,
\end{align*}
which is equivalent to $\sum_{i \in B}e_i \geq  \frac{1}{2}\sum_i{e_i}$ for $a<1$. Moreover, note that each agent $v_i \in B$ needs (at least) an extra $e_i$ to become solvent and, hence, it must be $\sum_{i \in B}{e_i}\leq \frac{1}{2}\sum_i{e_i}$, due to the budget constraint. We can conclude that $\sum_{i \in B}e_i =  \frac{1}{2}\sum_i{e_i}$ and, hence, we can obtain a solution to instance $\mathcal{I}$ of \textsc{Partition}, as desired.
\medskip

\textbf{Hardness of computing the minimum budget that makes an agent solvent.} 
The proof follows by the same reduction as in the previous case. We will prove that computing the minimum budget necessary to make agent $S$ solvent corresponds to solving an instance from \textsc{Partition}.
As before, whenever instance $\mathcal{I}$ admits a solution $X'$, we inject an amount of $e_i$ to each agent $v_i$ such that  $x_i \in X'$, and, as in the previous case, we obtain that a budget of $\frac{1}{2}\sum_i{e_i}$ suffices to make  $S$ solvent. We now argue that any cash injection policy with a budget of $\frac{1}{2}\sum_i{e_i}$ that can make agent $S$ solvent leads to a solution for instance~$\mathcal{I}$ when the default costs are $\alpha \in [0,1/2), \beta \in [0,1]$.

Let $t_i$ be the cash injected at agent $v_i$ and let $t_S$ be the cash injected directly at agent $S$; clearly, $t_S+\sum_i{t_i}\leq \frac{\sum_i{e_i}}{2}$. As before, let $B$ be the set of $v_i$ agents that become solvent by the cash injection policy. Clearly, if $\sum_{i\in B}{e_i} = \frac{1}{2}\sum_i{e_i}$, we immediately obtain a solution to the \textsc{Partition} instance. Otherwise, $\sum_{i \in B}{e_i}< \frac{1}{2}\sum_i{e_i}$ and the total assets of $S$ are 

	\begin{align*}
		a_S &= \frac{4}{3}\sum_{i \in B}{e_i}+\frac{2 \alpha}{3}\sum_{i \notin B}{(e_i+t_i)}+ t_S\\
		&\leq \frac{4}{3}\sum_{i \in B}{e_i}+\frac{2 \alpha}{3}\sum_{i \notin B}{e_i}+ t_S+\sum_{i \notin B}{t_i}\\
		&= \frac{4}{3}\sum_{i \in B}{e_i}+\frac{2 \alpha}{3}\sum_{i \notin B}{e_i} + \frac{\sum_i{e_i}}{2}-\sum_{i \in B}{e_i}\\
		&= \frac{1}{3}\sum_{i \in B}{e_i}+\frac{1}{2}\sum_i{e_i}+\frac{2 \alpha}{3}\sum_{i \notin B}{e_i}\\
		&= \frac{1}{3}\sum_{i \in B}{e_i}+\frac{1}{2}\sum_i{e_i}+\frac{2 \alpha}{3}\left( \sum_i e_i-\sum_{i \in B} e_i\right) \\
		&=\left( \frac{1}{2}+\frac{2\alpha}{3}\right) \sum_i e_i+\frac{1-2\alpha}{3}\sum_{i \in B}e_i\\
		&< \left( \frac{1}{2}+\frac{2\alpha}{3}\right) \sum_i e_i+\frac{1-2\alpha}{3}\left( \frac{\sum_i e_i}{2}\right) \\
		&= \frac{2}{3}\sum_i{e_i}+\frac{ \alpha}{3}\sum_i{e_i},
	\end{align*}
where the second equality holds due to the budget constraint and the strict inequality holds since $\alpha<1/2$ and $\sum_{i \in B}{e_i}< \frac{1}{2}\sum_i{e_i}$; the claim follows. 
\end{proof}

\subsection{Optimal debt removal}\label{sec:bailin}

In this section, we focus on maximizing systemic liquidity by appropriately removing edges/debts. As an example, consider again Figure \ref{fig:e1}, where the central authority can increase systemic liquidity by removing the edge between $v_4$ and $v_5$.


    

\begin{theorem}\label{thm:alg-3}
\odr{} is \emph{\NP}-hard.
\end{theorem}
\begin{proof}
The proof relies on a reduction from the \NP-complete problem RXC3 \citep{Gon85}, a variant of \textsc{Exact Cover by 3-Sets} (X3C). In RXC3, we are given an element set $X$, with $|X| = 3k$ for an integer $k$, and a collection $C$ of subsets of $X$ where each such subset contains exactly three elements. Furthermore, each element in $X$ appears in exactly three subsets in $C$, that is $|C| = |X| = 3k$. The question is if there exists a subset $C' \subseteq C$ of size $k$ that contains each element of $X$ exactly once. Given an instance $\mathcal{I}$ of RXC3, we construct an instance $\mathcal{I}'$ as follows. We add an agent $t_i$ for each element $i$ of $X$ and an agent $s_i$ for each subset $i$ in $C$, as well as two additional agents, $S$ and $T$. Each $s_i$, corresponding to set $(x, y, z) \in C$, has external assets $e_i=4$ and liability $1$ to the three agents $t_x, t_y, t_z$ corresponding to the three elements $x, y, z \in X$. Furthermore, each $s_i$ has liability $Z$ to agent $S$, where $Z$ is a large integer. Finally, each agent $t_i$ has liability $1$ to agent $T$; see also Figure \ref{fig:10}. Note that this construction requires polynomial time.

\begin{figure}[htbp]
\centering
\includegraphics[scale=0.65]{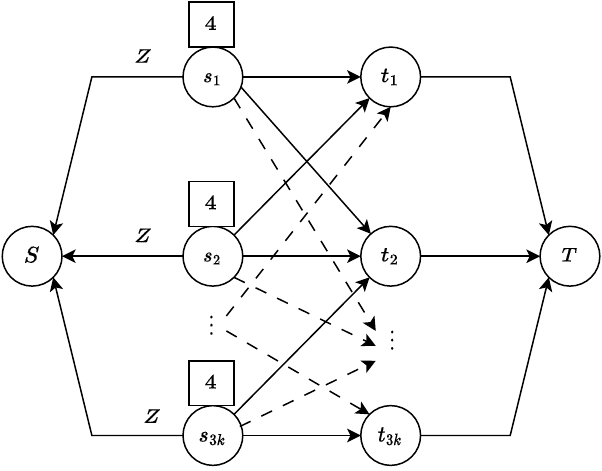}
\caption{The reduction used to show hardness of computing an edge-removal policy that maximizes systemic liquidity. All edges with missing labels correspond to liability $1$.}
\label{fig:10}
\end{figure}

When instance $\mathcal{I}$ is a yes-instance with a solution $C'$, we claim that $\mathcal{I}'$ admits a solution with systemic liquidity $14k$. Indeed, it suffices to remove all edges from the $s_i$ agents, with $i \in C'$, towards $S$. Then, the liquidity due to agents $s_i$, with $i \in C'$, equals $6k$, while each of the $2k$ agents whose edge towards $S$ was preserved generates a liquidity of $4$. 

It suffices to show that any solution that generates liquidity at least $14k$ can lead to a solution for instance $\mathcal{I}$. 
First, observe that it is never strictly better for the financial authority to remove an edge from some agent $s_i$ towards an agent $t_j$. Let $S_k$ and $S_r$ be the subsets of agents $s_i$ whose edges towards $S$ are kept and removed, respectively, and let $\chi = |S_r|$. The liquidity traveling from agents in $S_k$ towards their direct neighbors is exactly $4(3k-\chi)$, while the liquidity traveling from agents in $S_r$ towards their direct neighbors is exactly $3\chi$. The maximum liquidity traveling from agents $t_j$ towards $T$ is at most $\min\{3k, 3\chi + (3k-\chi)\frac{12}{Z+3}\}$. We conclude that the maximum liquidity is bounded by $12k-\chi+\min\{3k, 3\chi + (3k-\chi)\frac{12}{Z+3}\}$. 

Note that, whenever $\chi > k$, then the maximum liquidity is at bounded by $15k-\chi < 14k$. Similarly, whenever $\chi < k$ and since $Z$ is arbitrarily large, the maximum liquidity is bounded by $12k+2\chi + (3k-\chi)\frac{12}{Z+3} < 14k$. It remains to show that whenever $\chi=k$, a liquidity of at least $14k$ necessarily leads to a solution in $\mathcal{I}$. Indeed, by the discussion above, any such solution must have liquidity equal to $3k$ traveling from agents $t_j$ towards $T$, i.e., all these liabilities are fully repaid. This, in turn, can only happen if each of the $t_j$ agents receives a payment of at least $1$ from the $s_i$ agents. Using the assumptions that i) $\chi = k$, ii) $Z$ is arbitrarily large, iii) payments are proportional to liabilities, and iv) each $t_j$ has exactly three neighboring $s_i$ agents, this property holds only when the neighbors of the $\chi$ agents in $S_r$ are disjoint. This directly translates to a solution for instance $\mathcal{I}$ and the RXC3 problem.
\end{proof}



    

We note that the objective of systemic solvency, i.e., guaranteeing that all agents are solvent, can be trivially achieved by removing all edges. However, adding a liquidity target, makes this problem more challenging.

\begin{theorem} \label{thm:bailin-default-cost}
	In networks with default costs, the following problems are \emph{\NP}-hard:
		\begin{enumerate}[label=\emph{\alph*})]
\item \odr{} when required to ensure systemic solvency

\item Compute an edge set whose removal ensures systemic solvency and minimizes the amount of deleted liabilities. 
\item Compute an edge set whose removal guarantees that a given agent is no longer in default and minimizes the amount of deleted liabilities.
	\end{enumerate}
\end{theorem}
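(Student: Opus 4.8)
The plan is to handle all three parts with a single reduction from \textsc{Partition}, in the spirit of the proof of \Cref{thm:alg-1}. Given an instance $\{x_1,\dots,x_k\}$ with $\sum_i x_i = 2T$ (if the sum is odd the instance is trivially a no-instance), I would build the network $\mathcal{N}'$ on banks $\sigma,\rho,w_1,\dots,w_k$ and a sink $\tau$, with external assets $e_\sigma=T$ and $e_i=0$ otherwise, and liabilities $l_{\sigma\rho}=T$, $l_{\rho w_i}=x_i$ and $l_{w_i\tau}=x_i$ for every $i$. At clearing, $\sigma$ is solvent and forwards $T$ to $\rho$; but $\rho$ owes $2T$, so $\rho$ — and hence every $w_i$ — is in default. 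Because every feasible solution for parts (a) and (b) is fully solvent, the exact values of $\alpha,\beta$ will be irrelevant (likewise for (c)); so the construction certifies hardness for every $\alpha,\beta\in[0,1]$, in particular within the default-cost model of the theorem, and if a version that genuinely exploits a small $\alpha$ were wanted one could instead graft in the \textsc{Partition} gadget of \Cref{thm:alg-1}.

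The engine of the reduction is a knapsack-type bottleneck at $\rho$. I would first establish a normal-form claim for parts (a)/(b): in any edge-removal yielding systemic solvency it is never worse to keep $l_{\sigma\rho}$ (this only adds liquidity and relaxes $\rho$'s solvency constraint), and within each gadget $l_{w_i\tau}$ is retained iff $l_{\rho w_i}$ is retained (keeping $l_{w_i\tau}$ while deleting $l_{\rho w_i}$ leaves $w_i$ with no incoming and in default; keeping $l_{\rho w_i}$ while deleting $l_{w_i\tau}$ merely wastes the $x_i$ that arrives at $w_i$). Thus every feasible removal is described by the set $W=\{i: l_{\rho w_i}\text{ retained}\}$, which must satisfy $\sum_{i\in W}x_i\le T$ so that $\rho$ is solvent; the resulting network is fully solvent with remaining liquidity $T+2\sum_{i\in W}x_i$ and deleted liability $4T-2\sum_{i\in W}x_i$ (the total liability is $5T$). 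Optimising over $W$, a removal ensuring systemic solvency with liquidity at least $3T$ — equivalently deleted liability at most $2T$ — exists iff some $W$ has $\sum_{i\in W}x_i=T$, i.e.\ iff the instance is a yes-instance; this proves (a) and (b). For (c), systemic solvency of the $w_i$'s is not required, so all that matters is taking the designated bank $\rho$ out of default; since $\rho$'s incoming can never exceed the $T$ it already receives from $\sigma$, this forces deleting a set $R$ of edges $l_{\rho w_i}$ with $\sum_{i\in R}x_i\ge T$, and the minimum total weight of such a set equals $T$ exactly when the instance is a yes-instance — so a removal of deleted liability at most $T$ that puts $\rho$ out of default exists iff \textsc{Partition} is a yes-instance.

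The step I expect to be the main obstacle is the ``only if'' direction, i.e.\ the normal-form lemma: one must rule out every removal that does not fit the paired-gadget pattern — ``partial'' repairs of $\rho$ (excluded because a solvent bank pays all of its liabilities in full), removals of edges not incident to $\rho$ that could somehow help $\rho$ (excluded because $\rho$'s inflow is already maximal and cannot be increased), and asymmetric treatments of a gadget pair $(l_{\rho w_i},l_{w_i\tau})$. Designing the gadget so that these case distinctions collapse cleanly is the delicate part; once the structural lemma is in place, the three hardness statements follow from elementary subset-sum arithmetic. A closing remark would record that, since all solutions compared in (a)--(b) are solvent, and the cure in (c) only reduces $\rho$'s own liabilities, the reductions are insensitive to $\alpha$ and $\beta$, so the stated hardness holds throughout the default-cost regime.
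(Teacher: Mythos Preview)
Your reduction is correct and the normal-form lemma you outline does go through, but the construction is more elaborate than necessary. The paper reduces from \textsc{Subset Sum} using a bare star: a single agent $v_0$ with external assets $t$ and outgoing liabilities $x_i$ to leaves $v_1,\dots,v_k$ that have no debts of their own. In that network the leaves are trivially solvent, so the only feasibility constraint is $\sum_{i\in\text{kept}} x_i\le t$; maximising liquidity and minimising deleted liability both reduce immediately to finding a subset summing to exactly $t$, and part~(c) is the same question with $v_0$ as the designated agent. Your network is essentially this star with an added source $\sigma$ (which just hands $\rho$ the $T$ it could have had as external assets) and an added sink layer $w_i\to\tau$ (which doubles the liquidity contribution of each kept index but plays no structural role). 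These decorations are what force you to prove the paired-gadget normal form you flag as the ``main obstacle''; in the paper's stripped-down star that lemma evaporates because there is only one type of edge to delete. So both arguments work, but the paper's buys simplicity by removing precisely the moving parts you then have to argue away.
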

\begin{proof}
The proof for all these claims relies on a reduction from the \textsc{Subset Sum} problem, where the input consists of a set $X$ of integers $\{x_1, x_2, \dots, x_k\}$ and a target $t$ and the question is whether there exists a subset $X'\subseteq X$ such that $\sum_{i \in X}{x_i} = t$. Given an instance $\mathcal{I}$ of \textsc{Subset Sum}, we construct an instance $\mathcal{I}'$ by adding an agent $v_i$ for each integer $x_i$, adding an extra agent $v_0$ having $e_0 = t$ and setting the liability of $v_0$ to $v_i$ to be equal to $x_i$; see also Figure~\ref{hardness systemic solvent}. 

\begin{figure}[htbp]
\centering
\includegraphics[scale=0.65]{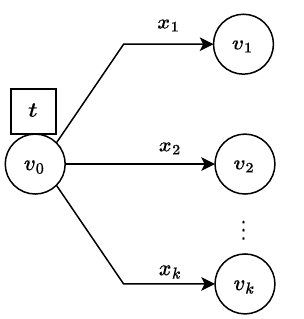}
\caption{An example of the reduction in the proof of Theorem \ref{thm:bailin-default-cost}.}
\label{hardness systemic solvent}
\end{figure}
Since $v_0$ is in default, the goal becomes to remove an edge set so that the remaining liability of $v_0$ is at most $t$. Whenever instance $\mathcal{I}$ is a yes-instance for \textsc{Subset Sum} admitting a solution $X'$, we remove edges from $v_0$ to agents corresponding to integers not in $X'$. Then, $v_0$ is solvent and the systemic liquidity equals $t$. Otherwise, if $\mathcal{I}$ is a no-instance, no edge set removal that leaves $v_0$ solvent can lead to systemic liquidity of (at least) $t$.  
\end{proof}

\section{Forgiving debt as a strategic action}\label{sec:games}
In this section, we consider the case of strategic agents who have the option to forgive debt. We begin with some definitions.
\medskip

\textbf{\mk{Edge-removal games.}} Consider a financial network $\fn$ of $n$ banks who act strategically. \hz{Each bank $v_i$ has the option to select a set of borrowers and forgive the corresponding debts they owe. More formally,} the strategy set of a bank is the power set of its incoming edges and a strategy, $s_i$, denotes which of its incoming edges that bank will remove, thus erasing the corresponding debt owed to itself. A \emph{pure strategy profile}, denoted by $\mathbf{s} = (s_1, \dots, s_n)$ 
 specifies a pure strategy for each bank\hz{; the pure strategy profile for all banks except $v_i$ is denoted by $\mathbf{s}_{-i}$}. 
 The edge-removal game can be defined with and without cash-injections. A given strategy profile will result in realized payments through maximal clearing payments 
 (as discussed in Section \ref{sec:preliminaries}),
 including possible cash injections through a predetermined cash injection policy. 
Our results hold for both the optimal cash injection policy and \greedy{}. 

A bank is assumed to strategize over its incoming edges in order to maximize its utility, i.e., its total assets, where we remark that a possible cash injection can be seen as increasing one's external assets.
The objective of the financial authority is to maximize the total liquidity of the system, i.e., the social welfare is the sum of money flows that traverse the network.
We will now define the notion of \emph{Nash equilibrium} in an edge-removal game. Given a strategy profile $\mathbf{s}$ and the corresponding maximal clearing payments  $\mathbf{P}$, \pk{the utility, i.e., the total assets, of bank $v_i$ is $u_i(\mathbf{P}) = e_i+\sum_{j \in [n]}{p_{ji}}$. We} say $\mathbf{s}$ is a pure Nash equilibrium if for each bank $v_i$ with $i \in [n]$ and any alternative strategy $s_i'$ of $v_i$, it holds that $u_i(\mathbf{P}) \geq u_i (\mathbf{P}')$, where $\mathbf{P}'$ denotes the maximal clearing payments under the profile $\mathbf{s} = (s_i',\mathbf{s}_{-i})$, where player $v_i$ chooses strategy $s_i$ while all other players keep using their equilibrium strategy. . We denote the set of all Nash equilibria in a given edge-removal game by $\mathbf{s}_{eq}$.

 
\textbf{\mk{The Price of Anarchy/Stability.}} The inefficiency of Nash equilibria in terms of liquidity is measured by the \emph{Price of Anarchy} (\emph{Stability}, respectively) \citep{koutsoupias1999worst}, which is equal to the optimal systemic liquidity over that of the worst (best, respectively) pure Nash equilibrium.
Note that the optimal systemic liquidity, $\liq_{OPT}$, corresponds to the maximal one when the financial authority can dictate everyone's actions (edge-removals). 

  \begin{equation*}
 \poa=\max_{\fn} \max_{\mathbf{s}\in \mathbf{s}_{eq}} \frac{\liq_{OPT}}{\liq_{\mathbf{s}}} \qquad  \pos=\max_{\fn} \min_{\mathbf{s}\in \mathbf{s}_{eq}} \frac{\liq_{OPT}}{\liq_{\mathbf{s}}}.
 \end{equation*}


\textbf{\mk{The Effect of Anarchy/Stability.}} We introduce a new notion that we use to measure the discrepancy between the systemic liquidity of the original network (no edge removal) and that of worst (best) Nash equilibrium. We call this the \emph{Effect of Anarchy, (Stability, respectively)} and \mk{use it as an indication of the (positive or negative) effect that strategic behaviour can have on the original network, in contrast to the ``Price of'' measure which indicates the extent to which the individual objectives of the banks and the objective of the regulator are (not) aligned.  The Effect of Anarchy, (Stability, respectively) is defined} as follows. For a given network $\fn$, we have 

  \begin{equation*}
 \eoa(\fn)= \max_{\mathbf{s}\in \mathbf{s}_{eq}} \frac{\liq_{\fn}}{\liq_{\mathbf{s}}} \qquad  \eos(\fn)= \min_{\mathbf{s}\in \mathbf{s}_{eq}} \frac{\liq_{\fn}}{\liq_{\mathbf{s}}}.
 \end{equation*}

The Effect of Anarchy (Stability, respectively) of a given class of networks is determined by the Effect of Anarchy (Stability, respectively) of the network in that class that demonstrates the maximum multiplicative distance of the worst (best, respectively) equilibrium from the original network in terms of liquidity, call that the \emph{extreme} network. The Effect of Anarchy (Stability, respectively) of a given class of networks can be positive or negative depending on whether the respective equilibria admit better or worse liquidity than the original network. 

The Positive Effect of Anarchy (Stability, respectively) of a given class of networks is equal to the inverse of  the Effect of Anarchy (Stability, respectively) of the extreme network  among all networks in the class whose worst (best, respectively) equilibrium is better than the original in terms of liquidity. For example, if a network in a given class has liquidity equal to $1$ with no edge-removals, and the worst (best, respectively) equilibrium admits total liquidity equal to $2$, we say that that class of networks has a Positive Effect of Anarchy (Stability, respectively) of at least  $2$. 

The Negative Effect of Anarchy (Stability, respectively) of a given class of networks is equal to the Effect of Anarchy (Stability, respectively) of the extreme network  among all networks in the class whose worst (best, respectively) equilibrium is weakly worse than the original in terms of liquidity.
For example, if a network in a given class has liquidity equal to $1$ with no edge-removals, and the worst (best, respectively) equilibrium admits total liquidity equal to $1/2$, we say that that class of networks has a Negative  Effect of Anarchy (Stability, respectively) of at least $2$.  

We investigate properties of Nash equilibria in the edge-removal game with respect to their existence and quality, while we also address computational complexity questions under different assumptions on whether default costs and/or cash injections apply. Our results on the Effect of Anarchy of edge removal games imply that, rather surprisingly, in the presence of default costs even the worst Nash equilibrium can be arbitrarily better than the original network in terms of liquidity. However, the situation is reversed in the absence of default costs, where we observe that the original network can be considerably better in terms of liquidity than the worst equilibrium; in line with similar Price of Anarchy results.
We begin with some results for the basic case, that is, without default costs; recall that we do not refer to default costs in the statements for results holding for $\alpha=\beta=1$.

Our first result exploits the fact that for edge-removal games without cash injections, the strategy profile where all edges are preserved is a (not necessarily unique) Nash equilibrium.
\begin{theorem} \label{always-existence}
Edge-removal games without cash injections always admit Nash equilibria. 
\end{theorem}
\begin{proof}
We claim that it is a Nash equilibrium if no bank removes an incoming edge. Indeed, consider a financial network and an arbitrary debt relation in that network represented by an edge $e$ from $v_i$ to $v_j$. Since there are no default costs, removing $e$ and keeping everything else unchanged, will result in $v_i$ instantly having $p_{ij}$ additional assets to pay to its lenders (excluding $v_j$). This, however can lead to at most $p_{ij}$ additional assets to reach $v_j$ through indirect paths starting from $v_i$ so it can not lead to more total payments from $v_i$ to $v_j$.\footnote{This would not be true if additional money was inserted to the network in the form of a cash injection.} Hence, $v_j$ is better off not removing $e$. Since $v_j$ is an arbitrary bank and $e$ represents an arbitrary edge, our proof is complete.
\end{proof}

\begin{theorem}\label{thm:eoa-eos}
In edge-removal games without cash injections, the negative Effect of Anarchy is unbounded and the negative Effect of Stability is $1$.
\end{theorem}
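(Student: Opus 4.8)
The plan is to dispatch the Effect of Stability bound immediately and to establish that the Effect of Anarchy is unbounded by exhibiting an explicit family of instances.

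\textbf{Effect of Stability.} By \Cref{always-existence}, the strategy profile in which every bank preserves all of its incoming edges is a Nash equilibrium of the edge-removal game without cash injections, and the network it realizes is $\fn$ itself. Hence this profile contributes the value $\liq_{\fn}/\liq_{\fn}=1$ to the minimum defining $\eos$, so $\eos\le 1$.

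\textbf{Effect of Anarchy.} For each $M\ge 1$ I would take the financial network $\fn_M$ with $\alpha=\beta=1$ consisting of two banks $u_1,u_2$ with zero external assets forming a directed $2$-cycle of weight $M$ (so $l_{u_1u_2}=l_{u_2u_1}=M$), together with a disjoint pair $a,b$ where $e_a=1$, $l_{ab}=1$, and $b$ has no liabilities. In $\fn_M$ the maximal clearing payments leave $u_1,u_2$ solvent (each receives $M$ and pays $M$) and $a$ pays $1$ to $b$, so $\liq_{\fn_M}=2M+1$. Let $\hat{\s}$ be the profile in which $u_1$ forgives $(u_2,u_1)$, $u_2$ forgives $(u_1,u_2)$, and $b$ keeps $(a,b)$. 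Under $\hat{\s}$ the $2$-cycle is erased, so $u_1$ and $u_2$ end up with zero assets, and the only payment is $a\to b$, giving $\liq_{\hat\s}=1$. I claim $\hat\s$ is a Nash equilibrium: bank $u_1$ has a single incoming edge, so its only deviation is to reinstate $(u_2,u_1)$, but then $u_2$ --- which under $\hat\s$ still has no incoming edge and no external assets --- is in default with zero assets and pays $u_1$ nothing, so $u_1$'s total assets remain $0$; $u_2$ is symmetric, $b$ strictly prefers keeping $(a,b)$, and by the argument in the proof of \Cref{always-existence} forgiving an incoming edge never raises a bank's total assets, so no further deviation is profitable. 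Consequently $\eoa\ge \liq_{\fn_M}/\liq_{\hat\s}=2M+1$, which is unbounded as $M\to\infty$.

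The only delicate step is the equilibrium verification for $\hat\s$, and the conceptual point behind it is that the collapse in liquidity genuinely needs \emph{two} mutually reinforcing forgivings: a lone forgiving can never sustain a bad equilibrium, since --- as in the proof of \Cref{always-existence} --- a single forgiving bank is at best indifferent, in which case its foregone payment must return to it along indirect paths, and such paths are strictly longer than the erased edge, so total liquidity never decreases. The $2$-cycle realizes the required interdependence in the simplest possible way: once one of its edges is forgiven the other carries no payment, so reinstating a single edge resurrects nothing and neither bank can unilaterally undo the collapse; scaling $M$ then makes the ratio diverge.
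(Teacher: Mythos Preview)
Your proof is correct and follows essentially the same idea as the paper: the Effect of Stability bound comes directly from \Cref{always-existence}, and the Effect of Anarchy lower bound is witnessed by a $2$-cycle in which both banks forgive their incoming edges. The paper uses a bare $2$-cycle of weight $1$, obtaining liquidity $2$ versus $0$ at the bad equilibrium; your added disjoint edge $(a,b)$ and the parameter $M$ simply keep the denominator positive so that the ratio $2M+1$ is finite but diverges, which is a tidier way of handling the division-by-zero that the paper leaves implicit.
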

\begin{proof}

The result on the Effect of Stability follows by the proof of Theorem \ref{always-existence}, since the original network with no debt-removal is proved to be a Nash equilibrium. For the Effect of Anarchy, consider a simple network with two banks, each having a liability of $1$ to the other. In the original network, the total liquidity is $2$, while it is not hard to see that the network with both edges removed is a Nash equilibrium with a total liquidity of zero.
\end{proof}

Our next result shows that Nash equilibria may not exist once we allow for cash injections.
\begin{theorem}\label{thm:no-ne-dc}
There is an edge-removal game with cash injections that does not admit Nash equilibria.
\end{theorem}
\begin{proof}
Consider the network shown in Figure \ref{without_NE}, and a budget equals to $M=2-3\epsilon$, where $\epsilon$ is an arbitrarily small constant.
	
		\begin{figure}[htbp]
			\centering
			\includegraphics[scale=0.7]{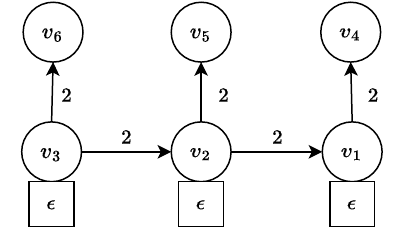}
			\caption{An edge-removal game without Nash equilibria in the case without default costs and with budget $M=2-3\epsilon$, where $\epsilon$ is an arbitrarily small constant.}
			\label{without_NE}
		\end{figure}

We begin by observing that $v_4, v_5$ and $v_6$ will never remove their incoming edges, since each of these agents has a single incoming edge, originating from some agent with positive externals. Hence, it suffices to consider the strategic actions of banks $v_1$ and $v_2$ regarding the possible removal of edges $(v_2,v_1)$ and $(v_3,v_2)$, respectively. There are four possible cases: 
\begin{itemize}
\item[A)] \emph{both edges are present.} In this case, $v_3$ has the highest threat index, i.e., $7/4$ and, thus receives the entire budget; it is not hard to verify that this is the optimal policy as well. The payments are, then, $p_{32}=1-\epsilon, p_{21}=1/2$, resulting in total assets $a_1=1/2+\epsilon, a_2=1$. 
\item[B)] \emph{$(v_2, v_1)$ is present, $(v_3, v_2)$ is removed.}  In this case, $v_2$ receives the budget and the following payments are realized $p_{32}=0, p_{21}=1-\epsilon$. This leads to total assets $a_1=1$ and $a_2=2-2\epsilon$. 
\item[C)] \emph{both edges are removed.} In this case, there is a tie on the maximum threat index ($v_1,v_2$ and $v_3$ have threat index $1$), so, assuming that the banks with lower index are prioritized, $v_1$ receives all the budget; again, it is not hard to verify that this is an optimal policy as well. This results to payments $p_{32}=0$ and $p_{21}=0$ with $a_1=2-2\epsilon, a_2=\epsilon$.
\item[D)] \emph{$(v_2, v_1)$ is removed, $(v_3, v_2)$ is present.} In this case, $v_3$ receives the budget. The payments are $p_{32}=1-\epsilon$ and $p_{21}=0$ (due to the removal), which implies $a_1=\epsilon$ and $a_2=1$ respectively. 
\end{itemize}

One can now easily check that the best response dynamics cycle, as starting from Case A, $v_2$ has an incentive to remove its incoming edge $(v_3, v_2)$ and we reach Case B. Then, $v_1$  has an incentive to remove its incoming edge $(v_2, v_1)$ (we are now in Case C), which leads $v_2$ to have an incentive to reinstate its incoming edge $(v_3, v_2)$ (thus, reaching Case D). Finally, in Case D, $v_1$ has an incentive to reinstate $(v_2, v_1)$, leading to Case A again. 
\end{proof}

\begin{theorem}\label{thm:pos}
The Price of Stability in edge-removal games (with or without cash injections) is unbounded. 
\end{theorem}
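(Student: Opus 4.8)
The plan is to exhibit, for any target ratio, a financial network together with a budget $M$ such that \emph{every} pure Nash equilibrium of the edge-removal game has systemic liquidity bounded by a constant, while the optimal liquidity (achievable when the regulator dictates the edge-removals) is arbitrarily large. Since $\pos$ is the minimum over equilibria of $\liq_{\opt}/\liq_{\s}$ and the theorem claims it is unbounded, it suffices to show that in some instance the unique (or best) equilibrium is much worse than $\opt$. The natural candidate is a small gadget in which a single strategic bank $v_j$ has an incentive to remove an incoming edge that is in fact carrying most of the flow: removing that edge rescues $v_j$'s borrower from default costs and increases $a_j$, but it simultaneously destroys a large chunk of the liquidity that the regulator could have routed through the preserved edge. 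By scaling the liabilities on a ``wasted'' branch of the network (edges leading to a sink with no further obligations), $\opt$ grows without bound while the equilibrium liquidity stays fixed.

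Concretely, I would build on the same phenomenon already used in Figure~\ref{fig:e1} and in the proof of Theorem~\ref{thm:no-ne-dc}: default costs ($\alpha<1$ or $\beta<1$, or even the cash-injection mechanics with $\alpha=\beta=1$) create a situation where a lender strictly prefers a strategy profile that kills flow. The key steps, in order, are: (i) describe the gadget — a borrower $b$ with external assets, a large liability $Z$ to a ``liquidity sink'' $v_k$ and a small liability to the strategic bank $v_j$, arranged so that when all edges are present $b$ is in default and only a $\beta$-fraction (or a proportional share) of the $Z$-flow survives, whereas when $v_j$ removes its incoming edge $b$ becomes solvent and pays the sink in full; (ii) compute $a_j$ in both profiles and verify that $v_j$ strictly prefers the edge-removal profile, so that no Nash equilibrium can preserve that edge (and check that the remaining banks, having only forced single incoming edges, never deviate), pinning down the set of equilibria; (iii) compute $\liq_{\s}$ at every such equilibrium and show it is $O(1)$ independent of $Z$; (iv) compute $\liq_{\opt}$ by having the regulator preserve the edge (and possibly inject the budget to make $b$ solvent), obtaining liquidity $\Omega(Z)$; (v) conclude $\pos \geq \liq_{\opt}/\liq_{\s} = \Omega(Z) \to \infty$.

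The main obstacle I anticipate is step (ii): I must ensure the instance has a pure Nash equilibrium at all (so that the $\pos$ quantity is well-defined and the bound is not vacuous) and that \emph{every} equilibrium is bad, not merely one of them — otherwise I would only have bounded the Price of Anarchy. This requires a careful case analysis of the strategic banks' incentives, much like the four-case analysis in Theorem~\ref{thm:no-ne-dc}, but arranged so the best-response dynamics converge rather than cycle; the cleanest route is to keep only \emph{one} genuinely strategic bank (all other banks having a single, non-removable incoming edge) so that its dominant strategy determines a unique equilibrium, and then all that remains is the (routine) liquidity arithmetic showing the gap scales with $Z$. A secondary subtlety is making the argument robust to the choice of cash-injection policy (optimal vs.\ \greedy{}), which the theorem statement explicitly covers; this should follow because in the gadget the budget is either irrelevant to the equilibrium flow or is spent the same way under both policies.
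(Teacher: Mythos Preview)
Your proposal has the argument running in the wrong direction, and the gadget you describe cannot deliver what you need. You want the equilibrium to \emph{remove} an edge and thereby destroy liquidity, while the regulator's optimum would \emph{keep} it. But look at your own step (i): when $v_j$ removes its incoming edge, the borrower $b$ becomes solvent and ``pays the sink in full,'' i.e., the full $Z$ flows. So the equilibrium liquidity is $\Omega(Z)$, not $O(1)$ --- your step (iii) is simply false for the gadget you describe. More structurally, in the no-cash setting Lemma~\ref{lem:impact_of_removal} (with default costs) shows that any edge removal weakly benefiting the remover also weakly increases total liquidity, and Theorem~\ref{always-existence} (without default costs) shows no removal is ever strictly beneficial; together these rule out any construction in which a selfishly preferred removal \emph{decreases} liquidity when there is no budget. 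Since the theorem must cover the no-cash case, your approach cannot be patched there.

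The paper's proof goes the other way: it builds a network (no default costs, no cash needed) in which the unique equilibrium \emph{keeps} all edges and has constant liquidity, while the regulator's optimum \emph{removes} an edge to unlock $\Omega(Z)$ flow. The mechanism is a two-cycle $v_1\leftrightarrow v_2$ with liabilities $Z$ in each direction, together with a ``leak'' edge $v_1\to v_3$ of liability $Z$ and small external assets at $v_1$. Proportionality forces the cycle to carry only a tiny flow when the leak is present, but if $v_3$'s incoming edge is deleted the cycle can self-fund to carry $Z$ each way. Crucially, $v_3$ would \emph{never} delete that edge (it is her only income), so every equilibrium keeps it; uniqueness follows because each bank has a single incoming edge. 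The point you are missing is that the $\pos$ gap here comes from an edge removal that \emph{hurts} the bank who would have to perform it --- that is precisely why no equilibrium achieves it.
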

\begin{proof}
We present the proof for the case without cash injections and note that the proof carries over for the case of limited budget regardless of who receives it. Moreover, note that the proof does not assume default costs but the result immediately applies to that (more general) case as well, i.e., $\alpha=\beta=1$ is a special case of default costs.

		\begin{figure}[htbp]
			\centering
			\includegraphics[scale=0.7]{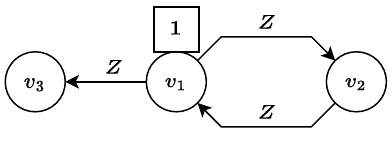}
			\caption{A financial network with no default costs that admits unbounded price of stability.}
			\label{Unbouned PoS without default cost}
		\end{figure}

Consider the network shown in Figure \ref{Unbouned PoS without default cost}, where $Z$ is an arbitrary large constant. Since each bank has exactly one incoming edge, no edge removals occur at the unique Nash equilibrium. By the proportionality principle it holds that the three payments have to be equal (to $1$), that is $p_{13}=p_{12}=p_{21}=1$,  which results in the systemic liquidity of $3$. However, systemic liquidity of $2Z$ can be achieved when $v_3$ removes its incoming edge. Indeed, then each remaining payment can be equal to $Z$. We conclude that $\pos \geq 2Z/3$, so the price of stability can be arbitrarily large for appropriately large values of $Z$, as desired.
\end{proof}

\begin{theorem}\label{eoa-order-of-n}
The negative Effect of Stability in edge-removal games with cash injections is $\Omega(n)$.
\end{theorem}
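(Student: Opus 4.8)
The plan is to prove the bound by exhibiting, for every $n$, one financial network $\fn$ on $n$ banks together with a budget $M$ for which (i) with no edge removals, the systemic liquidity produced after the cash injection policy is applied (whether the optimal policy or \greedy) is $\Omega(n)$, while (ii) \emph{every} pure Nash equilibrium of the induced edge-removal game has systemic liquidity $O(1)$; together these give $\eos=\Omega(n)$. The backbone of $\fn$ is a long ``transmission path'' $v_n\to v_{n-1}\to\cdots\to v_1$ along which a single unit of injected cash can travel in full (liabilities along the path at least $1$, path banks with no external assets), so that if the budget lands on the head $v_n$ the resulting liquidity is $\Theta(n)$. The remaining liabilities and external assets are chosen so that, in the all-preserved configuration, $v_n$ is the unique maximizer of the threat index (equivalently, of the marginal liquidity gain), hence both the optimal policy and \greedy inject the whole budget at $v_n$ and $\liq_\fn=\Theta(n)$.

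The crux is a ``trap'' that destabilizes every high-liquidity configuration. I would attach to the path a small side structure so that a path bank, by deleting one of its incoming edges, causes the cash injection to be diverted away from $v_n$ toward a short path --- possibly toward a nearby bank that then routes a larger share of the budget back to the deviator --- strictly increasing the deviator's total assets while collapsing global liquidity to a constant. This exploits the non-monotone behaviour of cash injections noted right after Theorem~\ref{thm:opt-cash-injection}: the bank that receives the budget can flip abruptly as edges are deleted, because deleting an incoming edge rescales the threat indices, and a bank feeding a long chain can thereby lose (or seize) the top threat index. I would set the gadget parameters so that such a profitable deletion is available in \emph{every} configuration in which the transmission path still carries $\Omega(n)$ flow, so no equilibrium can have liquidity above a constant. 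To guarantee that the game does admit a pure Nash equilibrium, I would exhibit an explicit ``fully collapsed'' profile --- in the spirit of the one used in the proof of Theorem~\ref{thm:poa-eoa-n-1}, where every bank except $v_1$ deletes its incoming edge --- and check directly that reinstating an edge is never strictly profitable, since reinstating an edge only lengthens a chain whose additional flow does not return to the deviator (the same obstruction that underlies Theorem~\ref{always-existence}).

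The step I expect to be the main obstacle is (ii): ruling out \emph{all} good equilibria rather than displaying a single bad one as in the Effect of Anarchy bound (Theorem~\ref{thm:poa-eoa-n-1}). I would organise this around a monotonicity argument on the ``effective depth'' of the transmission path, i.e., the length of the path segment that actually carries the injected cash in the clearing outcome: I aim to show that whenever this depth exceeds a fixed constant there is a bank with a strictly profitable deletion, so that every equilibrium has bounded effective depth and hence $O(1)$ liquidity. Extra care is needed with \greedy's tie-breaking (ties in the threat index resolved toward the smallest index), because the identity of the budget recipient, and thus a deviator's payoff, can depend on it; I would either tune the gadget so that no exact ties arise at the relevant configurations, or carry the tie-breaking rule through the incentive comparisons explicitly, as is done in the proofs of Theorems~\ref{thm:no-ne-dc} and~\ref{thm:poa-eoa-n-1}. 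Finally, since the statement covers both the optimal injection policy and \greedy, every incentive comparison must be verified for both; this should be routine because on all configurations relevant to the argument the two policies coincide, each injecting the entire budget at the unique maximizer.
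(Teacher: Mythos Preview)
Your plan is essentially the paper's approach: a long transmission path so that the untouched network has liquidity $\Theta(n)$ once the whole budget lands on the deep end, together with a gadget that makes every equilibrium divert the budget to a bank near the shallow end, collapsing liquidity to $O(1)$. The paper's concrete realisation takes $n=2k$ path banks $v_1,\dots,v_k$, each with a small ``leak'' edge of relative weight $1/H$ to a private sink $v_{k+i}$, so that $\mu_i = 1 + \frac{H-1}{H}\mu_{i-1}\approx i$; the incentive step is exactly the one you isolate: if at equilibrium the recipient is some $v_i$ with $i\ge 3$, then $v_{i-1}$ strictly gains by deleting $(v_i,v_{i-1})$ and thereby becoming the recipient itself (pocketing the full unit rather than a $(1-1/H)$-fraction filtered through $v_i$).

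One point to watch when you instantiate your gadget: the equilibrium the paper ends up with is \emph{not} the fully-collapsed profile you propose. In their construction the unique equilibrium keeps $(v_2,v_1)$ and gives the budget to $v_2$; the bottom bank $v_1$ strictly prefers this (it gets $3-2/H$) over deleting and receiving the unit directly (which yields only $2$). So the ``reinstating never helps'' check you plan to run on the fully-collapsed profile will fail for natural versions of the path-with-leaks construction; you will likely need to locate the equilibrium one step up the path, as the paper does, and then separately argue (as they do) that no further horizontal edge can survive because its presence would let a neighbour profitably re-route the budget. This is a small correction to your endgame, not to the overall strategy.
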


\begin{proof}

Consider the network in Figure \ref{PoS} where the budget $M=1$, $k=n/2$ and $Z$ is arbitrarily larger than $k$.
		\begin{figure}[htbp]
		\centering
		\includegraphics[scale=0.7]{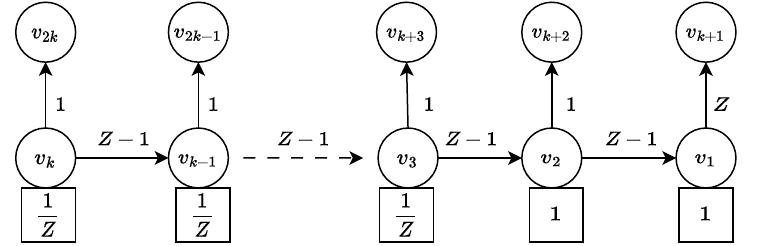}
		\caption{A network that yields the negative Effect of Stability of $\Omega(n)$ for budget $M=1$, $n=2k$, and arbitrarily large $Z$.}
		\label{PoS}
	\end{figure}
We start by noticing that $\mu_1=1$, while for $i=2, \dots, k$, it holds that  $\mu_{i}=1+\frac{Z-1}{Z}\mu_{i-1}\approx 1+\mu_{i-1}$, for sufficiently large $Z$; all other banks are solvent. Hence, the total liquidity of the original network, which is in fact the optimal one, is achieved when $v_k$ receives the entire budget of $M=1$ as a cash injection, and 
is roughly $kM=n/2$, when $Z$ is sufficiently large. 

We now claim that under any Nash equilibrium, $v_2$ will receive the  budget and all edges $(v_i,v_{i-1})$ for $i\in \{3, \dots, k\}$ are removed. This would complete the proof, as the total liquidity would be at most $\frac{1}{Z}(k-2)+2+2\leq 5$. We now prove this claim. Consider any equilibrium and observe that $v_{i}$, for $i=k+1, \dots,  2k$, must have their unique incoming edge present. Now, assume for a contradiction that some bank $v_i$ with $i \in \{3, \dots, k\}$ gets a cash injection; this implies that the edge $(v_i,v_{i-1})$ is present as, otherwise, the result holds trivially. Then, bank $v_{i-1}$  has total assets $1-1/Z^2+e_{i-1}$, but can increase them to $1+e_{i-1}$ by strategically removing its incoming edge. So, under any Nash equilibrium, either $v_2$ or $v_1$ receives a cash injection. In the former case, where edge $(v_2,v_1)$ is present, $a_1=3-2/Z$, while the assets of $v_1$ would be $2$ if it removed its incoming edge and received the cash injection. 

So far, we have proven that $v_2$ gets the cash injection and it remains to show no other edge $(v_i, v_{i-1})$ for $i\in \{3,\dots, k\}$ exists in a Nash equilibrium. Now, observe that if such an edge exists, then neighboring edges on the horizontal path cannot exist as that would contradict that $v_2$ gets the cash injection. Then, when $i>4$, bank $v_{i-2}$ would have an incentive to add edge $(v_{i-1}, v_{i-2})$, thus, making bank $v_{i}$ the recipient of the budget (for both optimal and greedy) and strictly increase its own total assets. The cases $i\in \{3,4\}$ can be easily ruled out as well.
Our proof is complete.
\end{proof}

We now present a series of results for the case where default costs exist, but cash injections are not allowed.
Contrary to the case with neither default costs nor cash injections, we show that a Nash equilibrium is no longer guaranteed to exist; the next result is complementary to Theorem~\ref{thm:no-ne-dc}.

\begin{theorem}\label{thm:gam-7}
There is an edge-removal game without cash injections but with default costs that does not admit Nash equilibria.
\end{theorem}
\begin{proof}
We present the proof for the case without cash injections and note that the proof carries over for the case of limited budget regardless of who receives it, as a sufficiently small budget will not alter the players incentives.

\begin{figure}[ht]
		\centering
		\includegraphics[scale=0.8]{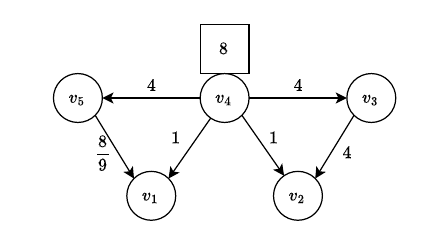}
		\caption{An instance of an edge-removal game without Nash equilibria, when $\alpha =\beta =1/4$.}
		\label{without neither bailout nor NE}
	\end{figure}
	
 Consider the financial network in Figure \ref{without neither bailout nor NE} and assume that default costs $\alpha =\beta =1/4$ are applied. 	
We begin by claiming that $v_5$ (and, by symmetry $v_3$) will never remove its incoming edge. Indeed, since $v_4$ has positive external assets, then $v_5$ will receive a positive payment from $v_4$ if the edge between them remains, however $v_5$ will have zero incoming payments if the corresponding edge is removed. Similarly, $v_1$ (and, by symmetry $v_2$) will never remove the edge from $v_5$ (respectively $v_3$). Therefore, it suffices to consider the possible removal of edges $(v_4,v_1)$ and $(v_4, v_2)$. We will prove that none of the following states are at equilibrium:  A) no edge is removed, B) $(v_4, v_1)$ is removed but $(v_4, v_2)$ remains, C) both edges are removed, and D) $(v_4, v_1)$ remains but $(v_4, v_2)$ is removed. 

Note that in Case A $v_4$ is in default, hence, due to the default costs, its payments are broken down as $\mathbf{p}_4=(1/5,1/5,4/5,0,4/5)$. But then, $v_3$ and $v_5$ are also in default, and $p_{51}=p_{32}=1/5$ in this case. So, the utility of both $v_1$ and $v_2$ is $2/5$. By removing $(v_4, v_1)$ in Case A, and moving to Case B, $v_1$ would increase its utility to $8/9$. Indeed, in this case, $\mathbf{p}_4=(0,2/9,8/9,0,8/9)$, while $p_{51}=8/9$ and $p_{32}=2/9$, so the utility of $v_1$ is $8/9$ and the utility of $v_2$ is $4/9$. But, then, it is beneficial for $v_2$ to remove $(v_4, v_2$).  In this case (Case C), $v_4$ is now solvent and all existing debts are paid for, thus giving utility $8/9$ to $v_1$ and utility $4$ to $v_2$. However, if $v_1$ then decides to reinstate its incoming edge $(v_4, v_1)$, its utility in this case (Case D) is increased to $10/9$, while $v_2$'s utility is $2/9$, since  $\mathbf{p}_4=(2/9,0,8/9,0,8/9)$, $p_{51}=8/9$ and $p_{32}=2/9$. Finally, $v_2$ prefers its utility in Case A to its utility in Case D so will decide to reinstate $(v_4, v_2)$ when in Case D, thus defining a cycle between the four possible states.
\end{proof}

For some restricted topologies, however, the existence of Nash equilibria is guaranteed; in particular, keeping all edges is a Nash equilibrium. 
\begin{theorem}\label{thm:dag-ne} 
Edge-removal games without cash injections but with default costs always admit Nash equilibria if the financial network is a tree or a cycle.
\end{theorem}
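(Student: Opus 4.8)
The plan is to argue that, for both trees and cycles, the all-edges-preserved profile is a Nash equilibrium, by showing that whenever a bank $v_j$ contemplates removing an incoming edge $e=(v_i,v_j)$, its total assets cannot strictly increase. The intuition, already present in the proof of Theorem~\ref{always-existence}, is that removing $e$ gives $v_i$ a bit of extra cash to route toward its \emph{other} lenders, but in the restricted topologies there is no alternative path back to $v_j$, so $v_j$ can only lose. The subtlety relative to the default-cost-free case is that removing $e$ changes $v_i$'s total liabilities $L_i$ (and possibly rescues $v_i$ from default, activating/deactivating the $\alpha,\beta$ factors), so one cannot simply invoke the monotonicity argument verbatim; a monotonicity/comparison lemma for maximal clearing payments under edge deletion is needed.

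First I would fix a bank $v_j$ and an incoming edge $e=(v_i,v_j)$, and compare the maximal clearing payments $\mathbf{P}$ in $\fn$ with $\mathbf{P}'$ in $\fn-e$. The key structural fact I would establish is that, in a tree or a cycle, once we delete $e$ the unique $v_i$--$v_j$ connectivity (if any) is severed: in a tree, $e$ is a bridge, so in $\fn-e$ there is \emph{no} directed path from $v_i$ to $v_j$ at all, hence $p'_{kj}=0$ for the (unique) in-neighbor... more carefully, $v_j$'s incoming payments in $\fn-e$ come only from edges not ``behind'' $e$, and those are unaffected or only decreased; in a cycle there is exactly one other directed path from $v_i$ to $v_j$ (going around the other way), and I would show the flow it can carry is bounded by what was already flowing, so again $a_j$ does not increase. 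The clean way to package this is: (i) a lemma that deleting an edge of a financial network can only (weakly) decrease the maximal clearing payment on every \emph{surviving} edge that does not lie on a path through the deleted edge from its tail — proved by exhibiting $\mathbf{P}'$ restricted appropriately as a sub-fixed-point of $\Phi$ in $\fn-e$ and invoking maximality, or via the standard Picard/round-based convergence of clearing payments; and (ii) the topological observation isolating which edges those are in trees and cycles.

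Second, for the tree case I would conclude directly: the payments into $v_j$ along edges whose tails are not reachable from $v_i$ are governed by a subnetwork untouched by the deletion, so they are unchanged, while any payment into $v_j$ that did depend on routing through $v_i$ is now zero or smaller; either way $a_j(\mathbf{P}')\le a_j(\mathbf{P})$, so $v_j$ has no profitable deviation. For the cycle $v_1\to v_2\to\cdots\to v_n\to v_1$, I would note each bank has a single incoming edge, so the only edge $v_j$ could remove is $e=(v_{j-1},v_j)$, and removing it sets $a_j=e_j$; I would then show $a_j(\mathbf{P})\ge e_j$ trivially (incoming payments are nonnegative), so again the deviation is not strictly improving. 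Actually the cycle case is essentially immediate from nonnegativity of payments, so the real content is the tree case.

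The main obstacle I anticipate is making the comparison lemma rigorous in the presence of default costs: when deleting $e$ turns $v_i$ from defaulting to solvent, $v_i$'s outgoing payments jump discontinuously (from $\alpha e_i+\beta\sum_k p_{ki}$ scaled by proportions, up to the full liabilities), and one must check this extra outflow cannot circle back to $v_j$ — which is exactly where the acyclicity of the tree (no back-path) or the single-cycle structure is used. I would handle this by arguing about the \emph{set of edges on directed $v_i$-to-$v_j$ paths in $\fn-e$} (empty for trees, a single path for cycles) and showing the total payment $v_j$ receives is a monotone function of flows on exactly those edges, which are in turn bounded. Care is also needed because maximal clearing payments, not arbitrary ones, must be compared, so I would lean on the lattice-theoretic characterization (Tarski/Knaster) of $\Phi$'s greatest fixed point to transfer inequalities between $\fn$ and $\fn-e$.
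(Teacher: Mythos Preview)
Your approach matches the paper's: argue that the all-edges-preserved profile is a Nash equilibrium because in a tree or directed cycle there is no alternative directed $v_i$-to-$v_j$ path, so the assets freed up at $v_i$ by deleting $(v_i,v_j)$ cannot flow back to $v_j$. The paper's proof is three terse sentences with no comparison lemma or fixed-point apparatus; your only slip is asserting that $v_j$'s other incoming payments are ``unchanged'' rather than merely weakly decreased (feedback through $v_j$'s own outgoing edges into its subtree can strictly lower them), but this only reinforces the desired inequality $a_j(\mathbf{P}')\le a_j(\mathbf{P})$.
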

\begin{proof}
We present the proof for the case without cash injections and note that the proof carries over for the case of limited budget regardless of who receives it.

Consider a financial network that is a tree (similar argument works in case of a directed cycle) and an arbitrary debt relation in that network represented by a directed edge $(v_i, v_j)$. By definition of the tree structure, it holds that there is no other path in the network  from $v_i$ to $v_j$. Hence, $v_j$ cannot benefit by removing that edge, even if the removal affects (increases) $v_i$'s available assets.
\end{proof}
The following result demonstrates that the positive impact of (individually benefiting) edge removals dominates the negative impact of reducing the number of edges through which money can flow, hence, edge removals are in line with the regulator's best interest too.
\begin{lemma}\label{lem:impact_of_removal}
Edge-removal games with default costs but no cash injections satisfy the following: given any network and any strategy profile, any unilateral removal of any edge(s) that weakly improves the total assets of the corresponding bank, also weakly improves the total assets of every other bank in the network. Consequently, the total liquidity of the system is increased.
\end{lemma}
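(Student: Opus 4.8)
The plan is to prove the following stronger statement: if $v_k$ is the deviating bank and $R$ is the set of its incoming edges that it removes, then $p'_{ij}\geq p_{ij}$ for every bank $v_j$ with $j\neq k$ and every $i$, where $\mathbf{P}$ and $\mathbf{P}'$ denote the maximal clearing payments before and after the deviation. Granting this, the lemma is immediate: for $j\neq k$, $a_j(\mathbf{P}')=e_j+\sum_i p'_{ij}\geq e_j+\sum_i p_{ij}=a_j(\mathbf{P})$; for $j=k$ the inequality $a_k(\mathbf{P}')\geq a_k(\mathbf{P})$ is exactly the hypothesis; and summing, $\liq(\mathbf{P}')-\liq(\mathbf{P})=\sum_j\big(a_j(\mathbf{P}')-a_j(\mathbf{P})\big)\geq 0$. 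The structural fact to exploit is that every edge of $R$ enters $v_k$, so the liabilities---and hence the payment rule---of every bank are unchanged, except that for a borrower $v_i$ with $(i,k)\in R$ the single liability $l_{ik}$ is deleted, which lowers $L_i$ and rescales $v_i$'s proportions among its remaining lenders.

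First I would record two monotonicity facts about the per-edge clearing rule. For bank $v_i$, express its payment to lender $v_j$ as a function of $v_i$'s total incoming payments $x$: $g^{j}_i(x)=l_{ij}$ if $e_i+x\geq L_i$, and $g^{j}_i(x)=(\alpha e_i+\beta x)\pi_{ij}$ otherwise. Fact (i): $g^{j}_i$ is non-decreasing in $x$---within the default branch this is clear, and the jump at the solvency threshold $x=L_i-e_i$ is upward because $\alpha e_i+\beta (L_i-e_i)\leq L_i$. Fact (ii): if $(i,k)\in R$ and $j\neq k$, and $h^{j}_i$ denotes the analogous function in the post-deviation network (with $L_i$ replaced by $L_i-l_{ik}$ and $\pi_{ij}$ by $l_{ij}/(L_i-l_{ik})$), then $h^{j}_i(x)\geq g^{j}_i(x)$ for all $x\geq 0$; this is a short case check according to whether $v_i$, at income $x$, is solvent in neither, exactly one, or both networks.

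The heart of the proof is a monotone fixed-point comparison. I would use the standard characterization of maximal clearing payments as the limit of the decreasing iteration starting from $\mathbf{Q}^{(0)}$ equal to the liability matrix of the post-deviation network, with $\mathbf{Q}^{(t+1)}=\Phi'(\mathbf{Q}^{(t)})$ where $\Phi'$ is the clearing operator of that network, and prove by induction on $t$ that every iterate satisfies both (a) $\mathbf{Q}^{(t)}\geq \mathbf{P}'$ entrywise (immediate from monotonicity of $\Phi'$ and $\mathbf{P}'\leq \mathbf{Q}^{(0)}$), and (b) $q^{(t)}_{ij}\geq p_{ij}$ for all $i$ and all $j\neq k$ (for $t=0$ this holds since $q^{(0)}_{ij}=l_{ij}\geq p_{ij}$, as no edge into $v_j$ was removed). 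For the inductive step of (b): to control $v_k$'s outgoing payments I invoke (a), which gives $\sum_m q^{(t)}_{mk}\geq \sum_m p'_{mk}=a_k(\mathbf{P}')-e_k\geq a_k(\mathbf{P})-e_k=\sum_m p_{mk}$---this is the unique place the hypothesis enters---so by Fact (i) and the fact that $v_k$'s payment rule is unchanged, $q^{(t+1)}_{kj}\geq p_{kj}$. To control any other bank $v_i$ ($i\neq k$) I invoke (b) at step $t$, which gives $\sum_m q^{(t)}_{mi}\geq \sum_m p_{mi}$ (the column index $i$ is $\neq k$); then for $j\neq k$, Fact (i) yields $q^{(t+1)}_{ij}=g^{j}_i\big(\sum_m q^{(t)}_{mi}\big)\geq g^{j}_i\big(\sum_m p_{mi}\big)=p_{ij}$ when $(i,k)\notin R$, and Facts (i)--(ii) yield $q^{(t+1)}_{ij}=h^{j}_i\big(\sum_m q^{(t)}_{mi}\big)\geq h^{j}_i\big(\sum_m p_{mi}\big)\geq g^{j}_i\big(\sum_m p_{mi}\big)=p_{ij}$ when $(i,k)\in R$. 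Passing to the limit gives $p'_{ij}\geq p_{ij}$ for all $j\neq k$, completing the proof.

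The obstacle is conceptual rather than computational. Removing $v_k$'s incoming edges can in principle impoverish $v_k$, and then every bank downstream of $v_k$ would receive less---so the statement is genuinely false without the hypothesis, and the argument must feed the hypothesis in at precisely the right point. The mechanism that makes this work is to carry two simultaneous lower bounds along the iteration: the $k$-th column stays above the \emph{new} equilibrium $\mathbf{P}'$ (which is exactly where $a_k(\mathbf{P}')\geq a_k(\mathbf{P})$ gets used, via (a)), while all other columns stay above the \emph{old} solution $\mathbf{P}$. Everything else---Facts (i) and (ii)---reduces to the routine solvent/default case analysis.
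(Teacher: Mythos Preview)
Your argument is correct, and it is considerably more rigorous than the paper's own proof. The paper gives only a brief reachability sketch: after removing the edge $(v_j,v_i)$, it asserts that ``there will be at least the same amount of money available to leave $i$ and $j$ and traverse these paths,'' and that banks not reachable from $i$ or $j$ are unaffected. This intuition is right, but as written it does not explicitly address cycles---where the money leaving $i$ depends on money arriving at $i$, which may itself depend on downstream effects of the removal---nor does it handle the simultaneous removal of several incoming edges.

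Your approach is genuinely different. You run the decreasing Tarski iteration for the post-deviation operator $\Phi'$ from the new liability matrix and carry two invariants: (a) the iterate dominates the new maximal clearing $\mathbf{P}'$ entrywise, and (b) all columns except column $k$ dominate the old clearing $\mathbf{P}$. The key insight---using (a) together with the hypothesis $a_k(\mathbf{P}')\geq a_k(\mathbf{P})$ to control $v_k$'s incoming payments, while (b) and Facts~(i)--(ii) handle every other bank---is exactly what makes the argument go through in the presence of cycles and default-cost discontinuities. Your Fact~(ii) (that deleting a liability $l_{ik}$ weakly raises each remaining per-edge payment at fixed income) and the upper-semicontinuity needed for the limit are both routine case checks, as you note. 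What this buys you over the paper's sketch is a proof that is robust to arbitrary topology and to the removal of an arbitrary set $R$ of incoming edges in one step, rather than a single edge with an inductive appeal.
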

\begin{proof}
Consider a network $\fn=( V, E)$ and a strategy profile $\mathbf{s}=(s_1, \ldots,s_n)$, under which banks have total assets according to $\mathbf{a}=(a_1,\ldots, a_n)$. Fix a bank $i$ and let $\mathbf{s}'=(s_i', \mathbf{s}_{-i})$ be the strategy profile that is derived by $\mathbf{s}$ if bank $i$ changes its strategy from $s_i$ to $s_i'$, where $s_i'$ is derived by $s_i$ by the removal of an edge $e=(v_j,v_i)$ (the argument can be applied repeatedly to prove the claim for more than one edge removals). By assumption, the total assets of bank $i$ under $\mathbf{s}'$, $a_i'$, satisfy $a_i'\geq a_i$. It holds that any bank reachable by $i$ or $j$ (the two endpoints of the edge that was removed) through a directed path will have at least the same total assets under $\mathbf{s}'$ than with $\mathbf{s}$, since there will be at least the same amount of money available to leave $i$ and $j$ and traverse these paths. The assets of banks not reachable by $i$ or $j$ will, clearly, not be affected by the removal of $e$. Hence, the assets of each bank in $\fn$ are weakly higher under $\mathbf{s}'$ than under $\mathbf{s}$.
The increase in the total liquidity follows since the total assets, by definition,  equal external assets plus payments. 
\end{proof}

In fact, the systemic liquidity of even the worst Nash equilibrium can be arbitrarily higher than at the original network. To see this, consider the network in the proof of Theorem \ref{thm:eos-unbounded}, which admits a unique Nash equilibrium with arbitrarily higher total liquidity than that of the original network.
\begin{theorem}\label{thm:eos-unbounded}
The positive Effect of Anarchy in edge-removal games with default costs and without cash injections is unbounded.
\end{theorem}
\begin{proof}

Consider the network in Figure \ref{tiny FoA} where default costs are $\alpha=\beta=\epsilon$ for some arbitrary small positive number $\epsilon$. If no edge is removed, then all banks except $v_n$ are in default and the following payments are realized: $p_{12}=p_{1n}=\epsilon/2$ and $p_{i,i+1}=\epsilon\cdot p_{i-1,i}=\epsilon^{i}/2.$ The systemic liquidity is then $\liq_{\fn}=\epsilon/2+\sum_{i=1}^{n-1}\frac{\epsilon^i}{2}<\frac{\epsilon}{1-\epsilon}$.
	\begin{figure}[ht]
		\centering
		\includegraphics[scale=0.7]{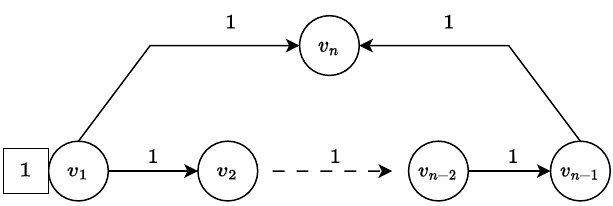}
		\caption{A financial network with $\alpha =\beta =\epsilon$, for an arbitrarily small positive number $\epsilon$, that admits a unbounded positive Effect of Anarchy.}
		\label{tiny FoA}
	\end{figure}
On the other hand, the unique Nash equilibrium is achieved when $v_n$ removes the edge pointing from $v_1$ to itself. The systemic liquidity in this case is $n-1$, and the proof follows.
\end{proof}

We conclude with our results on computational complexity for the setting with default costs.
\begin{theorem} \label{NP-hard in network with default}
In edge-removal games with default costs, the following problems are \emph{\NP}-hard: 
\begin{enumerate}[label=\emph{\alph*})]
\item Decide whether a Nash equilibrium exists.
\item Verify if a given strategy profile is a pure Nash equilibrium when pure equilibria are guaranteed to exist.
\item Compute a best-response strategy.
\item Compute a strategy profile that maximizes systemic liquidity. 
\end{enumerate}
\end{theorem}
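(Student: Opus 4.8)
The plan is to prove each of the four NP-hardness results by reduction, reusing as much machinery as possible from the earlier constructions in the paper (particularly Theorems~\ref{thm:alg-3} and~\ref{thm:bailin-default-cost}). The unifying idea is that in the presence of default costs, a bank's incentive to remove an incoming edge is driven by whether that removal rescues one of its borrowers from default, and we can engineer gadgets where this strategic choice exactly encodes a combinatorial selection problem such as \textsc{Subset Sum} or \textsc{Partition}.

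For part~d), maximizing systemic liquidity at a strategy profile, I would adapt the \textsc{Subset Sum} reduction behind Theorem~\ref{thm:bailin-default-cost}: place a single "root" bank $v_0$ with external assets equal to the target $t$ and liabilities $x_i$ to banks $v_i$, but now route the relevant edges so that the lenders $v_i$ are themselves strategic and benefit from forgiving exactly the debts of $v_0$ that do not belong to a valid subset; the optimal liquidity at any profile is then governed by whether $v_0$ can be made solvent, which happens iff the \textsc{Subset Sum} instance is a yes-instance. For part~c), computing a best response, I would isolate one bank $b$ whose borrowers form an independent combinatorial instance — $b$'s best response is to remove precisely the set of incoming edges whose removal triggers the fewest downstream default-cost activations, and making this choice optimal is again as hard as \textsc{Subset Sum}/\textsc{Partition}. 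For parts~a) and~b), existence and computation of a Nash equilibrium, the natural route is to combine a "combinatorial" gadget with a "cycling" gadget of the type used in Theorem~\ref{thm:gam-7} (Figure~\ref{without neither bailout nor NE}): design a network in which, conditioned on a guessed solution to the combinatorial instance, the remaining subgame has a pure Nash equilibrium, while a wrong guess forces the subgame into the no-equilibrium cycle; then a Nash equilibrium of the whole game exists iff the combinatorial instance is a yes-instance, and any equilibrium reveals a solution. Deciding existence is thus NP-hard (part~a), and under a promise of existence, finding an equilibrium lets one read off the solution (part~b).

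The main obstacle I anticipate is part~a), i.e., controlling the interaction between the combinatorial gadget and the cycling gadget: one must ensure that the only way to "stop the cycle" is via a genuine solution, that spurious equilibria cannot arise from unintended edge removals elsewhere in the network (as in the lengthy case analyses of Theorems~\ref{thm:no-ne-dc} and~\ref{thm:gam-7}), and that the default-cost parameters $\alpha,\beta$ are chosen so the arithmetic of solvency thresholds is exact. I would handle this by using the same trick as in the earlier proofs — giving peripheral banks a unique incoming edge from a bank with positive external assets so that they provably never deviate — thereby pruning the strategy space down to the handful of banks that matter, and then doing a finite case analysis on those. Parts~b), c), and~d) should then follow with comparatively routine reductions, with part~b) being essentially a corollary of part~a).
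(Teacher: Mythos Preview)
Your high-level architecture for part~a) --- glue a combinatorial gadget to a cycling gadget of the kind in Theorem~\ref{thm:gam-7}, so that the cycle is broken exactly when the combinatorial instance is a yes-instance --- is precisely what the paper does. But the paper organizes the four parts differently, and that reorganization buys a good deal of economy that your plan misses.

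The paper starts from a single core gadget (a reduction from \textsc{Partition}) in which essentially only one bank, call it $S$, has any nontrivial strategic choice: every other bank either has no incoming edges or has a unique incoming edge it would never delete. Because $S$ is the only real player, a Nash equilibrium \emph{always} exists --- it is simply $S$'s best response --- and parts~b) and~c) fall out simultaneously: computing $S$'s best response is exactly solving \textsc{Partition}. Part~d) then follows by appending a long path of agents downstream of $S$ so that $S$'s solvency (or not) is amplified into an arbitrarily large liquidity gap. Finally, part~a) is obtained by attaching a small five-node cycling sub-network whose stability is controlled by whether $S$ is solvent. So the dependency is b)/c)~$\to$~d)~$\to$~a), not a)~$\to$~b) as you suggest.

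Your route --- derive b) as a corollary of a) under the promise that an equilibrium exists --- is logically valid, but it is less direct and forces you to carry the full cycling gadget through the promise-problem argument. The paper's single-strategic-agent trick sidesteps this entirely and gives b)/c) with almost no case analysis. Your proposal for part~d) also slightly conflates two things: you speak of lenders ``strategically benefiting'' from forgiving the right debts, but part~d) asks only for a liquidity-maximizing profile, so incentives are irrelevant there; the paper's path-amplifier (or, even more simply, an appeal to the construction behind Theorem~\ref{thm:alg-3}) is the cleaner route.
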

\begin{proof}
We begin by proving that the problem of verifying whether a strategy profile is a Nash equilibrium (even when its existence is guaranteed) is \NP-hard. 
\medskip

\textbf{Hardness of verifying a Nash equilibrium.} Our proof follows by a reduction from the \textsc{Partition} problem. Recall that in \textsc{Partition}, an instance $\mathcal{I}$ consists of a set $X$ of positive integers $\{x_1, x_2, \dots, x_k\}$ and the question is whether there exists a subset $X'$ of $X$ such that $\sum_{i \in X'}{x_i} = \sum_{i \notin X'}{x_i} = \frac{\sum_{i \in X}{x_i}}{2}$.

The reduction works as follows. Starting from $\mathcal{I}$, we build an instance $\mathcal{I}'$ by adding an agent $v_i$ for each element $x_i \in X$ and allocating an external asset of $e_i = x_i$ to $v_i$; we also include two additional agents $S$ and $T$. Each agent $v_i$ has liability equal to $e_i$ to each of $S$ and $T$, while $T$ has liability $\frac{\sum_{i}{e_i}}{2}+\frac{1}{4}$ to $S$; see also Figure \ref{fig:compute_hard}. We furthermore set default costs $\alpha=\beta = 1/\sum_i{e_i}$; clearly, the reduction requires polynomial-time. 

Observe that, in any Nash equilibrium, agent $T$ keeps all its incoming edges. Indeed, removing an edge from agent $v_i$ will decrease $T$'s total assets, as there is no alternative path for payments originating at $v_i$ to reach $T$. Similarly, agent $S$ keeps its incoming edge from $T$ at any Nash equilibrium, as deleting it will reduce $S$'s total assets. Therefore, the only strategic choice in this financial networks is by agent $S$ about which edges  from agents $v_i$ to keep and which to remove. We denote by $S_k$ and $S_r$ the set of agents whose edges towards $S$ are kept and removed, respectively, and observe that agents in $S_k$ are in default while agents in $S_r$ are not. Clearly, as $S$ is essentially the only strategic agent, any best-response strategy by $S$ forms a Nash equilibrium. This guarantees the existence of a Nash equilibrium in $\mathcal{I}'$. We will show that in instance $\mathcal{I}'$, agent $S$ can compute her best response, and hence we can compute a Nash equilibrium, if and only if instance $\mathcal{I}$ of \textsc{Partition} is a yes-instance.

We first show that if $\mathcal{I}$ is a yes-instance for \textsc{Partition}, then agent $S$ has total assets $a_S = \frac{\sum_i{e_i}}{2}+\frac{1}{2}$. Indeed, consider the subset $X'$ in $\mathcal{I}$ with $\sum_{i \in X'}{x_i} = \frac{\sum_{i \in X}{x_i}}{2}$ and let $S_k$ contain agents $v_i$ where $x_i \in X'$, while $S_r$ agents $v_i$ with $x_i \notin X'$. Note that $T$ obtains total assets 
\begin{align*}
a_T &= \sum_{i \in S_r}{e_i}+\frac{1}{\sum_i{e_i}}\sum_{i \in S_k}{\frac{e_i}{2}}\\
&= \frac{\sum_i{e_i}}{2}+\frac{1}{4},
\end{align*}
and is therefore solvent, while $S$ obtains 
\begin{align*}
a_S &= \frac{1}{\sum_i{e_i}}\sum_{i \in S_k}{\frac{e_i}{2}} + \frac{\sum_i{e_i}}{2}+\frac{1}{4}\\
&= \frac{\sum_i{e_i}}{2}+\frac{1}{2}.
\end{align*}

We will now show that if $\mathcal{I}$ is a no-instance, then $S$'s total assets in $\mathcal{I}'$ are strictly less than $\frac{\sum_i{e_i}}{2}+\frac{1}{2}$; this suffices to prove the claim. Consider any subset $X' \subseteq X$ in $\mathcal{I}$ and the corresponding strategy profile in $\mathcal{I}'$ where $S$ keeps incoming edges from agents in $S_k$ while removes edges from agents in $S_r$. Let $\chi = \sum_{i \in S_r}{e_i}$ and observe that, as elements in $X$ are integers, it holds either $\chi \leq \frac{\sum_i{e_i}}{2}-\frac{1}{2}$ or $\chi \geq \frac{\sum_i{e_i}}{2}+\frac{1}{2}$. 

In the first case, when $\chi \leq \frac{\sum_i{e_i}}{2}-\frac{1}{2}$, we claim that $T$ is in default. Indeed, $T$ collects a total payment of $\chi$ from the agents in $S_r$ and a total payment of $\frac{1}{\sum_i{e_i}}\sum_{i \in S_k}{\frac{e_i}{2}} = \frac{1}{\sum_i{e_i}}\frac{\sum_i{e_i - \chi}}{2} = \frac{1}{2}-\frac{\chi}{2\sum_i{e_i}}$ from the agents in $S_k$. So, 
\begin{align*}
a_T &= \chi + \frac{1}{2} -\frac{\chi}{2\sum_i{e_i}}\\
&= \chi(1-\frac{1}{2\sum_i{e_i}}) +\frac{1}{2}\\
&\leq \left(\frac{\sum_i{e_i}}{2}-\frac{1}{2}\right) \left(1-\frac{1}{2\sum_i{e_i}}\right) + \frac{1}{2}\\
&< \frac{\sum_i{e_i}}{2},
\end{align*}
i.e., less than $T$'s liability to $S$; the first inequality follows by the assumption on $\chi$.
 Therefore, $S$ can collect $\frac{1}{\sum_i{e_i}}\frac{\sum_i{e_i-\chi}}{2} $ from the agents in $S_k$ and strictly less than $\frac{1}{2}$ from $T$. We conclude that $a_S < \frac{\sum_i{e_i}}{2}+\frac{1}{2}$ in this case. 

In the second case, when $\chi \geq \frac{\sum_i{e_i}}{2}+\frac{1}{2}$, $S$ obtains a total payment of $\frac{1}{\sum_i{e_i}}\sum_{i \in S_k}{\frac{e_i}{2}} = \frac{1}{2\sum_i{e_i}}(\sum_i{e_i}-\chi) \leq \frac{1}{2\sum_i{e_i}}(\frac{\sum_i{e_i}}{2}-\frac{1}{2}) = \frac{1}{4}-\frac{1}{4\sum_i{e_i}}$ from agents in $S_k$ and a payment of at most $\frac{\sum_i{e_i}}{2}+\frac{1}{4}$ from $T$, i.e., $a_S < \frac{\sum_i{e_i}}{2}+\frac{1}{2}$ again.

Since in any case, a no-instance $\mathcal{I}$ for \textsc{Partition} leads to an instance $\mathcal{I'}$ where $a_S < \frac{\sum_i{e_i}}{2}+\frac{1}{2}$, the claim follows, as we cannot compute a best-response strategy for $S$, and, by the discussion above, a Nash equilibrium.
\medskip

\textbf{Hardness of computing a best-response strategy.} The proof was given in the previous case.

\begin{figure}[ht]
	\centering
	\includegraphics[scale=0.7]{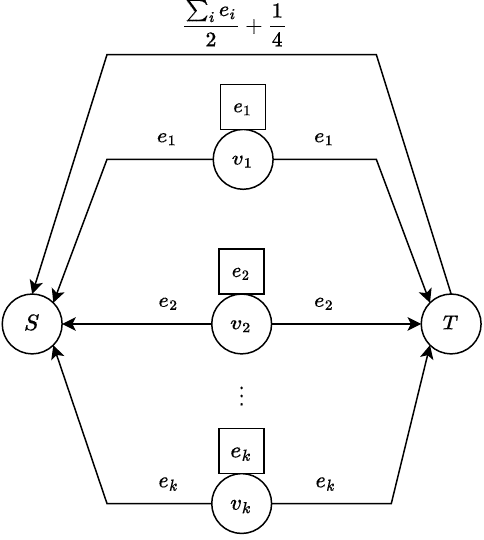}
	\caption{The instance arising from the reduction used in Theorem \ref{NP-hard in network with default}b.}
	\label{fig:compute_hard}
\end{figure}
\medskip

\textbf{Hardness of maximizing systemic liquidity.} The proof follows by the reduction from \textsc{Partition} described in the previous case by adding a path of $\ell$ agents as shown in Figure \ref{fig:compute_opt_hard}. By the discussion above, starting from a yes-instance in \textsc{Partition}, we have $a_S=\frac{ \sum_i e_i }{2}+\frac{1}{2}$, and $S$ as well as any agent $u_i$ are solvent. On the contrary, starting from a no-instance for \textsc{Partition}, agent $S$ is in default and the payments traveling to the $u_i$ agents get reduced by a factor of $\sum_i{e_i}$ at each edge. By selecting $\ell$ to be large enough, the claim follows.
\begin{figure}[ht]
	\centering
	\includegraphics[scale=0.7]{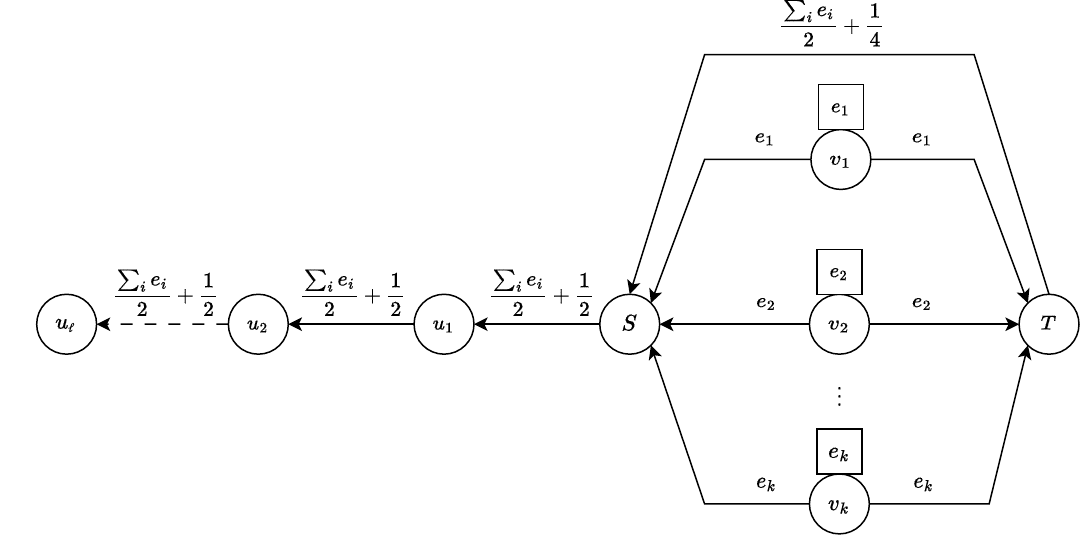}
	\caption{The modified instance in the proof of Theorem \ref{NP-hard in network with default}d.}
	\label{fig:compute_opt_hard}
\end{figure}
\medskip

\textbf{Hardness of deciding the existence of Nash equilibria.} 
Again, the proof follows by the reduction from \textsc{Partition} used in Theorem \ref{NP-hard in network with default}b by adding five agents, $\ell, m, r, x$ and $y$ with liabilities and external assets are show in Figure~\ref{fig:hard-existence}. Recall that the default costs are $\alpha=\beta = \frac{1}{\sum_i e_i}$ and, without loss of generality, we assume that $\sum_i e_i\geq3$.

\begin{figure}[ht]
	\centering
	\includegraphics[scale=0.7]{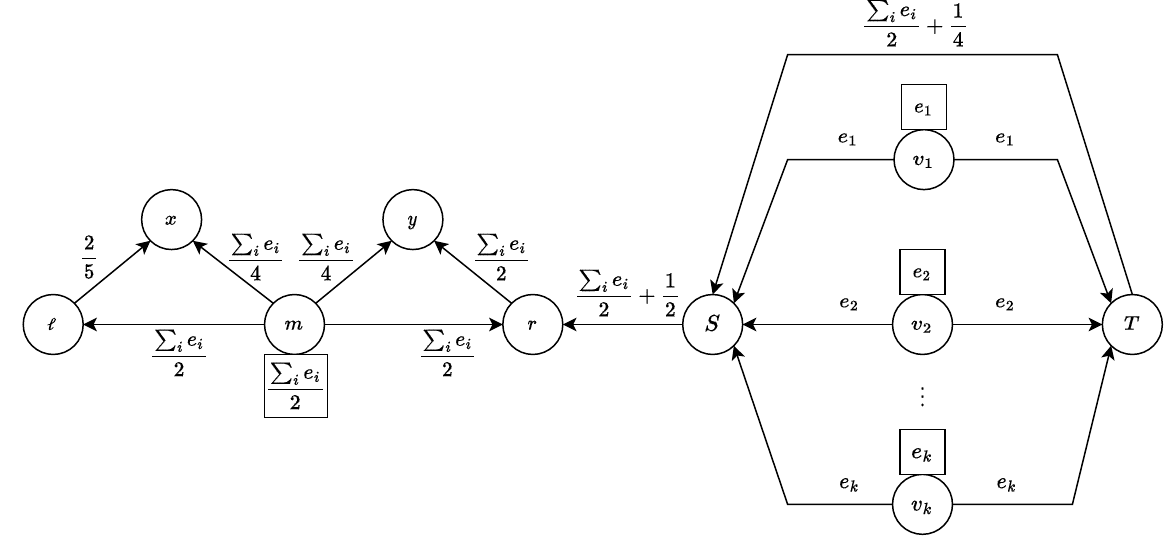}
	\caption{The instance used in the proof of Theorem \ref{NP-hard in network with default}a.}
	\label{fig:hard-existence}
\end{figure}

As argued earlier, $T$ always keeps all its incoming edges, while $S$ always keeps the edge from $T$. Similarly, agents $\ell$ and $r$ always keep their incoming edges, as by removing any incoming edge their total assets strictly decrease. As agents $v_i$ and $m$ do not have incoming edges, the only strategic agents are $x, y$ (with respect to the edges originating from $m$) and $S$ (with respect to edges from the $v_i$ agents). We first show that, if instance $\mathcal{I}$ of \textsc{Partition} is a yes-instance, then there is a Nash equilibrium. Indeed, we argued earlier that in this case $a_S =\frac{\sum_i{e_i}}{2}+\frac{1}{2}$ and, hence, $S$ is solvent. So, $r$ is solvent as well and, therefore, $y$ always keeps the edge from $m$ as the liability from $r$ is always fully paid. This implies that $m$ is necessarily in default and agent $x$ should keep the edge from $m$ as well. To conclude, the strategy profile where $x, y$ keep all edges is a Nash equilibrium. 

On the other hand, when instance $\mathcal{I}$ is a no-instance, we have shown that $S$ is in default. Then, $r$ collects a payment of $\chi \leq \frac{1}{2}+\frac{1}{2\sum_i{e_i}}$ from $S$ and is, hence, necessarily in default. When both $x$ and $y$ keep their edges from $m$, their total assets are $a_x = \frac{1}{3\sum_i{e_i}}+\frac{1}{6}$ and $a_y = \frac{1}{6}+(\frac{1}{3}+\chi)\frac{1}{\sum_i{e_i}}$. When $x$ removes the edge and $y$ keeps it, it is $a_x = \frac{2}{5}$ and $a_y = \frac{1}{5}+(\frac{2}{5}+\chi)\frac{1}{\sum_i{e_i}}$, i.e., $x$ improves compared to the previous case.
When they both remove these edges, we have $a_x = \frac{2}{5}$ and $a_y = \frac{\sum_i{e_i}}{2}$, i.e., $y$ improves compared to the previous case. Similarly,  when $x$ keeps the edge and $y$ removes it, the total assets are $a_x = \frac{3}{5}$ and $a_y = (\frac{2}{5}+\chi)\frac{1}{\sum_i{e_i}}$, i.e., $x$ improves compared to the previous case. The claim follows by observing that, in the last case, $y$ improves by keeping the edge. 
\end{proof}
\section{Conclusions}\label{sec:conclusions}
We considered problems arising in financial networks, when a financial authority wishes to maximize the total liquidity either by injecting cash or by removing debt. We also studied the setting where banks are rational strategic agents that might prefer to forgive some debt if this leads to greater utility, and we analyzed the corresponding games with respect to properties of Nash equilibria. In that context, we also introduced the notion of the Effect of Anarchy (Stability, respectively) that compares the liquidity in the initial network to that of the worst (best, respectively) Nash equilibrium,while  distinguishing between the cases where the equilibria (stable network structures) are better or worse than the original network.

Our work leaves some interesting problems unresolved. Given the computational hardness of some of the optimization problems, it makes sense to consider approximation algorithms. From the game-theoretic point of view, one can also consider the problems from a mechanism design angle, i.e., to design incentive-compatible policies where banks weakly prefer to keep all incoming liabilities. It would also be interesting to identify classes of games where all equilibria are better (or all equilibria are worse, respectively) than the original network, thus demonstrating only a positive (or only a negative, respectively) Effect of Anarchy/Stability.

\section*{Acknowledgements}
\textbf{Hao Zhou} acknowledges funding from Trustworthy AI - Integrating Learning, Optimisation and Reasoning (TAILOR), a project funded by European Union Horizon2020 research and innovation program under Grant Agreement 952215.

	





\bibliographystyle{plainnat}
\bibliography{ref}


\end{document}